\documentclass{amsart}

\usepackage{afterpage} 

\usepackage{amssymb, amsmath, amsthm, amsfonts}

\usepackage{array} 
\newcolumntype{P}[1]{>{\centering\arraybackslash}p{#1}}

\usepackage{bm}

\usepackage{caption}
\captionsetup[figure]{labelfont={bf},name={Fig.},labelsep=space}

\usepackage{changepage}

\usepackage{enumerate} 

\usepackage{epigraph}

\setlength{\epigraphwidth}{0.6\textwidth}

\usepackage{faktor} 

\usepackage{float}

\usepackage{graphicx}

\usepackage[utf8]{inputenc}

\usepackage{longtable}

\usepackage{mathtools}

\usepackage{soul}

\usepackage{subcaption}
\captionsetup[subfigure]{labelfont=rm}

\usepackage{tasks}

\usepackage{tensor} 

\usepackage{tikz}
\usepackage{tikz-cd}
\usetikzlibrary{shapes.geometric}
\usetikzlibrary{cd}
\usetikzlibrary{graphs,decorations.pathmorphing,decorations.markings}
\usetikzlibrary{chains,fit,positioning}

\usepackage{dynkin-diagrams}

\usepackage{wasysym}

\usepackage{xcolor}
\definecolor{lime}{HTML}{A6CE39}
\DeclareRobustCommand{\orcidicon}{%
	\begin{tikzpicture}
	\draw[lime, fill=lime] (0,0) 
	circle [radius=0.16] 
	node[white] {{\fontfamily{qag}\selectfont \tiny ID}};
	\draw[white, fill=white] (-0.0625,0.095) 
	circle [radius=0.007];
	\end{tikzpicture}
	\hspace{-2mm}
}
\foreach \x in {A, ..., Z}{%
	\expandafter\xdef\csname orcid\x\endcsname{\noexpand\href{https://orcid.org/\csname orcidauthor\x\endcsname}{\noexpand\orcidicon}}
}

\newcommand{\abs}[1]{\left \lvert #1 \right \rvert}
\newcommand{\pangle}[1]{\left\langle #1 \right\rangle}
\newcommand{\pbrace}[1]{\left\{ #1 \right\} }
\newcommand{\pround}[1]{\left( #1 \right)}
\newcommand{\psquare}[1]{\left[ #1 \right]}
\newcommand{\floor}[1]{\left \lfloor #1 \right \rfloor}


\newtheorem{theorem}{Theorem}[section]
\newtheorem{lemma}[theorem]{Lemma}

\newtheorem{prop}[theorem]{Proposition}

\newtheorem{remark}[theorem]{Remark}




\newcommand{\secmath}[1]{\texorpdfstring{$ #1 $}{T}}




\DeclareMathOperator{\Ad}{Ad}
\DeclareMathOperator{\Aut}{Aut}

\DeclareMathOperator{\diag}{diag}

\DeclareMathOperator{\id}{Id}

\DeclareMathOperator{\Jac}{Jac}
\DeclareMathOperator{\Ker}{Ker}

\DeclareMathOperator{\Orth}{O}

\DeclareMathOperator{\Pic}{Pic}

\DeclareMathOperator{\PP}{\mathbb{P}^1}

\DeclareMathOperator{\Res}{Res}

\DeclareMathOperator{\SO}{SO}
\DeclareMathOperator{\OO}{O}

\DeclareMathOperator{\SU}{SU}

\DeclareMathOperator{\Tr}{Tr}


\usepackage[colorlinks = true,
            linkcolor = blue,
            urlcolor  = red,
            citecolor = red,
            anchorcolor = blue,
            urlbordercolor= red ]{hyperref}

\usepackage{blkarray,booktabs,bigstrut}
\newtheorem*{theorem*}{Theorem}

\newcommand{\charge}{\ensuremath{k}}



\title{Dihedrally Symmetric Monopoles and Affine Toda Equations}

\author[H. W. Braden]{H. W. Braden \orcidA{}}
\address{
School of Mathematics and Maxwell Institute for Mathematical Sciences\\ The University of Edinburgh\\ 
Edinburgh EH9 3FD, Scotland, U.K.
}
\email{hwb@ed.ac.uk}

\author[Linden Disney-Hogg]{Linden Disney-Hogg \orcidB{}}
\address{
School of Mathematics \\ The University of Leeds \\ 
Leeds LS2 9JT, U.K.
}
\email{a.l.disney-hogg@leeds.ac.uk}

\date{July 2024}

\begin{document}

\begin{abstract}We show that any $SU(2)$ BPS monopole of charge $\charge$ with rotational spatial dihedral
symmetry is gauge equivalent to the Nahm data obtained from affine Toda equations of $C_l\sp{(1)}$ type when $\charge=2l$ or $A_{2(l-1)}\sp{(2)}$ type when $\charge=2l-1$. 

\end{abstract}

\maketitle

\section{Introduction}

Nahm's equations originally arose in the study of magnetic monopoles as
a symmetry reduction of the anti-self-dual equation
\begin{align} \label{eq:ASDequation}
\ast F_A = - F_A,
\end{align}
for a principal $G$-bundle $E$ and connection $A$ over a 4-dimensional manifold.
By taking $\Gamma=\mathbb{R}^3$ to be the closed subgroup of $\mathbb{R}^4$, the group of translations, and requiring the bundle and connection on $\mathbb{R}^4$ to be invariant under $\Gamma$ this gives us a (reduced) anti-self-dual pair $(A,E)$ over $X=\mathbb{R}^4 / \Gamma$ that satisfies Nahm's equations,
 the system of ordinary differential equations
\begin{align} \label{eq:NahmT0}
\begin{split} \frac{dT_1}{ds}&=[T_4,T_1]+[T_2,T_3], \\
\frac{dT_2}{ds}&=[T_4,T_2]+[T_3,T_1], \\
\frac{dT_3}{ds}&=[T_4,T_3]+[T_1,T_2]. \\
\end{split}
\end{align}
Here $T_1(s), T_2(s), T_3(s), T_4(s)$ are antihermitian $\charge\times \charge$ matrix-valued functions and the real parameter $s$ lies in an interval, or union of intervals, depending on the problem. A gauge transform acts as 
\begin{align*}
    T_i &\mapsto UT_i U^{-1}, \quad i=1,2,3, \\
    T_4 &\mapsto U T_4 U^{-1} - \frac{dU}{ds} U^{-1},
\end{align*}
where $U(s)$ is a unitary $\charge \times \charge$ matrix on the same interval(s). By a choice of gauge transform we may set $T_4=0$, and the residual gauge freedom acts via constant $U$. 
The boundary condition for the $T_*$'s also depend critically on the problem. We shall refer to \emph{Nahm data} as solutions to (\ref{eq:NahmT0}) satisfying these (as yet to be specified) boundary conditions for the relevant interval, or union of intervals.
If one takes the dual space $X^\ast=\mathbb{R}^4 / \Gamma^\ast$, where $\Gamma^\ast$ is isomorphic to $\mathbb{R}$, the reduced anti-self-dual pair $(\hat{A},\hat{E})$ is a BPS monopole. This correspondence between equations on
$X$ and $X\sp\ast$ is known as the Nahm transform,
an operation here that transforms solutions of Nahm's equations into monopoles, and vice versa \cite{Nahm1980, Nahm1982}.
The natural appearance of Nahm's equations in such diverse areas as hyperk\"ahler geometry and representation theory has led to their own
independent study. 

Of particular interest to us is the connection between integrable systems and Nahm's equations. Upon setting (with ${T_i}\sp\dagger=-T_i$, $T_4\sp\dagger =-T_4$)
\begin{equation}\label{nahmlax}
\begin{split}
\alpha&=T_4+i T_3,\ \ \beta = T_1+iT_2,
\\
L&=L(\zeta):=\beta -(\alpha+\alpha\sp\dagger)\zeta-\beta\sp\dagger \zeta^2, \ \
M=M(\zeta):=-\alpha-\beta\sp\dagger \zeta,
\end{split}
\end{equation}
for $\zeta \in \mathbb{C}$ one finds\footnote{Throughout we use the standard notation of $\dot{\phantom{T}}$ to mean $\frac{d}{ds}$.}
\begin{equation}
\begin{split}\label{integrability}
\dot{{T}_i} =[T_4,T_i]+\frac12\sum_{j,k=1}^3\epsilon_{ijk}[T_j(s),T_k(s)]
&\Longleftrightarrow
\dot L=[L,M]\\
\end{split}
\end{equation}
The characteristic equation $P(\zeta,\eta):=\det(\eta-L(\zeta))=0$ defines the spectral curve $\mathcal{C}$, a compact algebraic curve of genus $(\charge-1)^2$ which takes the form
\begin{equation}
P(\zeta,\eta):=\eta^\charge+a_1(\zeta)\eta^{\charge-1}+\ldots+a_\charge(\zeta)=0, \quad
\mathrm{deg}\, a_i(\zeta) \leq 2i\label{curve1}.
\end{equation}
The antihermiticity of the Nahm matrices $T_\ast$ means that the curve has a real structure, that is, it is invariant under the
anti-holomorphic involution
\begin{equation}\label{invol}
\mathfrak{J}:\, (\zeta,\eta)\rightarrow
(-1/{\overline\zeta},-{\overline\eta}/{{\overline\zeta}\sp2}).
\end{equation}
Thus Nahm's equations may be associated with a spectral curve endowed with real structure. Hitchin in his seminal paper \cite{Hitchin1983} established a fundamental trichotomy between Euclidean $\mathfrak{su}(2)$ BPS monopoles up to gauge equivalence, Nahm data for the interval $[0,2]$ up to gauge equivalence and such spectral curves. Here the Nahm data ensures that
the monopoles are regular on $\mathbb{R}\sp3$; these conditions on the Nahm data become conditions on the spectral curve.
Variations of this set-up for example allow the description of BPS monopoles for other gauge groups \cite{Hurtubise1989} and Dirac $U(1)$ multimonopoles (see for example \cite{Braden2023b}). Further, integrable systems techniques allow one to associate solutions to Nahm's equations with a linear flow on the Jacobian $\Jac(\mathcal{C})$ \cite{Ercolani1989,Braden2018} and making use of the 
Baker-Akhiezer function one may even reconstruct the solutions of (\ref{eq:ASDequation}) \cite{Braden2021a}.

We shall now focus on Euclidean $\mathfrak{su}(2)$ BPS monopoles. Despite the maturity of the subject the number of explicitly known spectral curves (equivalently monopoles) is few. This is in part because the Nahm data impose transcendental constraints \cite{Braden2021} on the curve $\mathcal{C}$ that we shall presently describe. Those curves and monopoles we do know have been found by using symmetry (see for example \cite{Hitchin1995}). Recently some new examples have been found \cite{Braden2023} and this work extends that.

In particular, we prove the following theorem (Theorem \ref{thm: main Dihedral} in the main text).  
\begin{theorem*}
Any $D_\charge$ rotationally dihedrally symmetric charge-$\charge$ Euclidean $\mathfrak{su}(2)$ BPS monopole is, up to an overall rotation, given by Nahm data gauge equivalent to (\ref{sutcliffeansatz}) obtained from the $\mathfrak{su}(\charge)$ affine Toda equations $A_{\charge-1}\sp{(1)}$ subject to the restriction of the variables (\ref{reductionconds}) arising from a folding procedure. For $k=2l$ and $k=2l-1$ these are respectively the $C_l^{(1)}$ and $A_{2(l-1)}^{(2)}$ affine Toda equations.
\end{theorem*}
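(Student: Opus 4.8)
The plan is to pass to the Nahm side of Hitchin's trichotomy and to realise the full dihedral symmetry as a folding of the affine Dynkin diagram of $A_{\charge-1}^{(1)}$. Since a $D_\charge$-symmetric monopole is in particular symmetric under the cyclic rotation subgroup $C_\charge \subset D_\charge$, I would first transport this cyclic symmetry to the Nahm data furnished by the trichotomy (which is well-defined up to gauge). Aligning the symmetry axis with a fixed axis after the ``overall rotation'' permitted in the statement, and implementing the generator of $C_\charge$ as conjugation of the Nahm matrices by a fixed unitary, the invariance constraint forces the data into the cyclic ansatz (\ref{sutcliffeansatz}) and reduces Nahm's equations (\ref{eq:NahmT0}) to the $\mathfrak{su}(\charge)$ affine Toda equations of type $A_{\charge-1}^{(1)}$. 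This is the known cyclic reduction, which I would take as the base of the argument.

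It then remains to account for the extra generator of $D_\charge$, a $\pi$-rotation about an axis orthogonal to the symmetry axis. On the spectral curve this acts holomorphically as an involution of $(\zeta,\eta)$ compatible with the real structure $\mathfrak{J}$ of (\ref{invol}) and reversing the cyclic ordering, while on the Nahm data it is realised by a second fixed conjugation. The crux is to show that invariance under this involution is exactly the restriction (\ref{reductionconds}) of the Toda variables, and that on the cycle of $\charge$ nodes underlying $A_{\charge-1}^{(1)}$ this restriction is the node-reversing $\mathbb{Z}_2$ diagram automorphism $i \mapsto -i$. Folding by this automorphism and reading off the quotient diagram by counting fixed and swapped nodes then yields the two cases: for $\charge = 2l$ the cycle has the two fixed nodes $0$ and $l$, giving $C_l^{(1)}$, whereas for $\charge = 2l-1$ there is the single fixed node $0$, giving the twisted affine algebra $A_{2(l-1)}^{(2)}$.

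The main obstacle I expect is the interplay between the two symmetry generators and the boundary data. First one must check that the $\pi$-rotation can be implemented compatibly with the antihermiticity of the $T_\ast$ and with the gauge already pinned down by (\ref{sutcliffeansatz}), so that the two generators close to a genuine $D_\charge$-action and the reduction (\ref{reductionconds}) is consistent rather than over-determined. Second, and more delicate, one must verify that the folded Toda data still satisfies the pole and boundary conditions of Hitchin's trichotomy, so that it corresponds to a monopole regular on $\mathbb{R}^3$ and not merely to a formal solution of the reduced equations; here I would treat the even and odd charges separately, since the appearance of a twisted affine algebra in the odd case alters the simple-root structure and hence the precise form of the boundary behaviour.
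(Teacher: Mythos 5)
Your high-level strategy coincides with the paper's: take the cyclic result of \cite{Braden2011} as the base, so the Nahm data is already of the form (\ref{sutcliffeansatz}), then impose the extra order-2 rotation and identify the resulting constraint as the node-reversing diagram automorphism $i\mapsto -i$ of $A_{k-1}^{(1)}$ (your fixed-node count, two fixed nodes for $k=2l$ giving $C_l^{(1)}$ and one for $k=2l-1$ giving $A_{2(l-1)}^{(2)}$, is exactly the paper's identification). The genuine gap is that the step you label ``the crux'' --- that invariance under the second generator is exactly (\ref{reductionconds}) --- is precisely where all the content of the paper's proof lies, and you leave it unexecuted; moreover it is not the pointwise statement your sketch suggests. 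In the paper one vectorises $T_i=\sum_j A_{ij}CT_jC^{-1}$ with $A=\diag((-1)^m,(-1)^{m-1},-1)$ into $M_\pm V(C)=0=M_3V(C)$. The $T_3$-equation forces only that $b_i+b_{\sigma(i)}=0$ for some order-2 permutation $\sigma$ (normalised to $\sigma(i)=k+1-i$); the $a$-half of (\ref{reductionconds}) does \emph{not} follow from invariance at a fixed $s$, but only from persistence of the symmetry along the Toda flow: one introduces the ratios $\lambda_i=a_{k-i}/a_i$, shows $\dot\lambda_i=0$ once the $b$-constraint holds, and then the flow-consistency condition (\ref{eq: consistency of b restriction}) together with irreducibility of the spectral curve ($\beta\neq 0$, hence no $a_i$ vanishes) forces all $\lambda_i^2=1$ by an induction starting from $a_1^2(1-\lambda_1^2)=0$. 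Irreducibility is essential here and never appears in your plan. There are also reality and parity subtleties you do not anticipate: one must first rotate about the $C_k$-axis to make the $a_i$ real, and the compatibility condition $\overline{\beta}=(-1)^{k(m-1)}\beta$ then shows $\diag(1,-1,-1)$ cannot be a symmetry for odd $k$ (consistent with Remark \ref{remark: other possible rotation for even charge}); finally one must exhibit the constant antidiagonal $C$ and check it can be normalised into $\SU(k)$ so that the constraints are sufficient as well as necessary, i.e.\ the reduction is not over-determined.

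Your second, ``more delicate'' obstacle is aimed in the wrong direction. The theorem starts from a given monopole, so the Nahm boundary conditions N1--N3 hold by hypothesis and are unaffected by the algebraic analysis above; there is nothing to re-verify about poles or regularity in proving this statement. Existence of dihedral monopoles (equivalently, of Toda solutions with the correct pole behaviour) is a separate question which the theorem deliberately does not address --- the paper says so explicitly and treats the residue/boundary analysis separately in \S\ref{sec: solutions}. So as written, your proposal defers the one step that constitutes the proof, while devoting its care to a step that is not required.
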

In \S\ref{sec: curves symmetry and flows} we describe the action of the $D_\charge$ symmetry on the monopole both via its action on the Nahm data and via its action on the variables of the spectral curve. We also give a brief description of the conditions imposed on the Nahm data / spectral curve, the difficulty present in imposing them, and as such the motivation for finding reductions to more tractable problems by imposing symmetry. In \S\ref{sec: affine toda} we introduce the affine Toda equations including the role of the underlying Lie algebra and the folding procedure. We also review how solutions to the affine Toda equations give solutions to Nahm's equations. \S\ref{sec: dihedral monopoles} applies this to the case of monopoles with $D_\charge$ symmetry, explaining also what the spectral curves of such monopoles are and why Nahm's equations linearise on the Jacobian of a hyperelliptic curve. Finally \S\ref{sec: the theorem} proves our main theorem. We will not describe the relevant solutions here though \S\ref{sec: solutions} will comment on the these. We conclude in \S\ref{sec: discussion} with some discussion.

\section{Curves, Symmetry and Flows}\label{sec: curves symmetry and flows}

Hitchin's description of monopoles may be understood as a dimensional reduction of the twistor correspondence that described (see \cite{Ward1991}) solutions of (\ref{eq:ASDequation}). The reduction of twistor space becomes the
Euclidean mini-twistor space $\mathbb{MT}$, the space of oriented lines in Euclidean 3-space. If the direction of an oriented line is given by $\zeta $, an affine coordinate of $[\zeta_0:\zeta_1]\in \mathbb{P}^1$, and $\eta\in \mathbb{C}$ describes the point in the plane perpendicular to this through which the line passes then we have
$\eta\partial_\zeta\in T\mathbb{P}^1 \cong \mathbb{MT}$. Here
$\mathfrak{J}:\mathbb{MT}\rightarrow \mathbb{MT}$ is the involution that sends a line with orientation to the same line with opposite orientation\footnote{We identify
$\mathbb{MT} = \pbrace{(\bm{u}, \bm{v}) \in S^2 \times \mathbb{R}^3 \, | \, \bm{u} \cdot \bm{v} = 0}$ with $T\mathbb{P}^1$ by
$\zeta :=\zeta_0/\zeta_1 = ({u_1 + i u_2})/({1-u_3})$ and
$\eta = d\zeta({\bf v})= 
({v_1+iv_2+\zeta v_3})/({1-u_3})$ where 
${\bf v}=v_{1}\partial_{u_1}+v_{2}\partial_{u_2}+v_{3}\partial_{u_3}\in T_{\bf u}S^2$. Here $\mathfrak{J}(\bm{u}, \bm{v})=(-\bm{u}, \bm{v})$.
Note that $(u_1,u_2,u_3)=\left(\zeta+\overline{\zeta}, [{ \zeta-\overline{\zeta} }]/{i}, |\zeta|^2 -1\right)/({1+|\zeta|^2})$.
}. The monopole spectral curve $\mathcal{C}$ is then a subset of $T\mathbb{P}^1$, and in fact a compact algebraic curve. Hitchin's construction thus gives us something different from the usual spectral curve of an integrable system: here we also have an interpretation of the extrinsic 
geometry of the curve and the space in which it lies.

\subsection{Actions of symmetries}\label{sec: action of symmetries}

In particular the action of the 
Euclidean group $\operatorname{E}(3) = \mathbb{R}^3 \rtimes \Orth(3)$ and symmetries of configurations of monopoles
are reflected in an action on mini-twistor space, the spectral curve and Nahm's equations. For example translations $\bm{x} \in \mathbb{R}^3$ 
act on Nahm data via
$
\bm{x} : T_j \mapsto T_j + i x_j \id_k;
$
in terms of the spectral data this corresponds to the shift
$\bm{x} : \eta \mapsto \eta - i\psquare{(x_1 +ix_2) - 2i x_3 \zeta + (x_1 -i x_2)\zeta^2}.
$
Such a translation yields a shift in the coefficient $a_1(\zeta)$ of the spectral curve and henceforth we will use this freedom to set $a_1(\zeta)=0$. This gives us a \emph{centred monopole} and the
subgroup of $E(3)$ that fixes this centre is called a 
\emph{point group}; it is a subgroup of $\OO(3)$.

To describe the remaining Euclidean symmetries it is easiest to first describe rotations acting upon $(\zeta,\eta)$; these come from the natural action of $\SO(3)$ on $\mathbb{MT}$. If $A \in \SO(3)$ corresponds to a rotation about $\bm{n} \in S^2$ by angle $\theta$, then 
we may associate to
$
\begin{pmatrix}a+i b&c+id\\-[c-id]&a-ib\end{pmatrix}\in \SU(2)
$
the fractional linear transformation\footnote{As discussed in \S\ref{sec: action of symmetries} the conventions here have been chosen so as to make the action of rotations on the $T\mathbb{P}^1$ coordinates be equal to that on the corresponding Nahm matrices, via the definition of the spectral curve. Our conventions are equivalent to those in \cite[(14), (25)]{Hitchin1995} but differ from those taken in \cite[(8.198), (8.219)]{Manton2004}. The latter conventions give inconsistent actions on the Nahm data, as can be seen by considering a charge-1 monopole.}
\begin{equation}\label{eq: action of rotation on TP1 variables}
A:(\zeta, \eta)\mapsto\left( \frac{ [a+ib]\zeta+[c+id] }{-[c-id]\zeta+[a-ib]}, \frac{\eta}{( -[c-id]\zeta+[a-ib])^2 }
\right)
\end{equation}
where 
$a=\cos(\theta/2)$, $b= n_3\sin(\theta/2)$, $c=n_1\sin(\theta/2)$, $ d=n_2\sin(\theta/2)$. On Nahm data we have the corresponding transformation
$$
A : T_i \mapsto  A_{ij}\psquare{\rho(A) T_j \rho(A)^{-1}} 
$$
where $\rho(A)$ is the image of $A$ in $\SU(\charge)$ in the appropriate representation. (This will be described below once the Nahm data for our
problem is prescribed.)

The two transformations both depends on choices of convention that must be chosen to be compatible. To test compatibility one may consider the $\charge=1$ case where the Nahm matrices are $T_j = i x_j$ for some constant $x_j \in \mathbb{R}$. The spectral curve is then given by 
\[
\eta - \psquare{(ix_1 - x_2) +2x_3 \zeta + (ix_1 + x_2) \zeta^2} = 0. 
\]
We shall use three examples which shall be important later.
\begin{enumerate}
\item A rotation about the $x_1$-axis by angle $\pi$ send $x_2\mapsto -x_2$, $x_3 \mapsto -x_3$, and fixes $x_1$. This means the spectral curve is sent to 
\[
\zeta^2\pbrace{(\eta/\zeta^2) - \psquare{(ix_1 + x_2)(-1/\zeta)^2 +2x_3(-1/ \zeta) + (ix_1 - x_2)}} = 0, 
\]
and we see that the corresponding transform of the $T\mathbb{P}^1$ variables is $(\zeta, \eta) \mapsto (-1/\zeta, \eta/\zeta^2)$. 
\item A similar rotation now about the $x_2$-axis sends the spectral curve to 
\[
-\zeta^2\pbrace{(-\eta/\zeta^2) - \psquare{(ix_1 + x_2)(1/\zeta)^2 +2x_3(1/ \zeta) + (ix_1 - x_2)}} = 0, 
\]
so the transform is $(\zeta, \eta) \mapsto (1/\zeta, -\eta/\zeta^2)$. 
    \item A rotation about the $x_3$-axis by angle $2\pi/n$ sends $x_1 + i x_2 \mapsto \omega (x_1 + i x_2)$ and so the spectral curve to 
    \[
\omega \pbrace{(\omega^{-1}\eta) - \psquare{(ix_1 - x_2) +2x_3 (\omega^{-1} \zeta) + (ix_1 + x_2) (\omega^{-1}\zeta)^2}} = 0 \Rightarrow (\zeta, \eta) \mapsto (\omega \zeta, \omega \eta).
\]
\end{enumerate}

We note that although the full M\"obius group will act on $\zeta$ (and consequently $\eta$) and the curve $\mathcal{C}$ and its transform have the same period matrices, it is only if we restrict to $\SU(2)$ that the real structure is preserved. It remains to describe the action of the elements $A\in \OO(3)\setminus\SO(3)$. These are 
antiholomorphic and follow from\footnote{Note that $-\id$ acts on $\mathbb{MT}$ as $(\bm{u}, \bm{v})\mapsto (-\bm{u}, -\bm{v})$.
The map $-\id$ is called \emph{inversion} in \cite{Houghton1996b}, whereas the reflection $ \diag(1, 1, -1)$ corresponding to the map $(\zeta, \eta) \mapsto (1/\bar{\zeta}, -\bar{\eta}/\bar{\zeta}^2)$
is called inversion in \cite{Hitchin1995}.
}
\[
-\id : (\zeta, \eta) \mapsto (-1/\bar{\zeta}, \bar{\eta}/\bar{\zeta^2}).
\]
By composing the transformation for $-\id$ with a rotation by $\pi$ about the $x_1$-axis we see, for example, that the reflection $\sigma = \diag(-1, 1, 1)$ corresponds to $\sigma : (\zeta, \eta) \mapsto (\bar{\zeta}, \bar{\eta})$. 

Because the action of $E(3)$ on the spectral curve variables is nontrivial, or equivalently because the action on the Nahm matrices is not just by conjugation, these transforms do not give gauge equivalent monopoles. 

\subsection{The Hitchin and Nahm conditions}

It remains to describe the Nahm data for Euclidean $\mathfrak{su}(2)$ BPS monopoles of charge $\charge$, first, following Nahm, for the system (\ref{eq:NahmT0}) and then Hitchin's conditions on the monopole
spectral curve equivalent to the these. They are\footnote{We use the suffix convention $i \in \{1, 2, 3\}$ and $a \in \{1, \ldots, 4\}$.}:
\begin{itemize}
\item[\bf{N1}] Nahm's equations (\ref{eq:NahmT0}) are satisfied,

\item[\bf{N2}] $T_i(s)$ is regular for $s\in(0,2)$ and has simple
poles at $s=0$ and $s=2$, the residues of which form an irreducible
$\charge$-dimensional representation of $\mathfrak{su}(2)$,

\item[\bf{N3}]
$\displaystyle{T_a(s)=-T_a^\dagger(s),\quad T_a(s)=T_a^T(2-s).}$\footnote{Given Nahm matrices satsifying all the Nahm conditions except for $T_a(s) = T_a^T(2-s)$, one can always find a gauge in which this final condition holds \cite{Hitchin1995}.}


\item[\bf{H0}] $\mathcal{C}$ has no multiple components, i.e. each irreducible component has multiplicity 1.

\item[\bf{H1}] Reality conditions
$\displaystyle{
a_r(\zeta)=(-1)^r\zeta^{2r}\overline{a_r(-1/{\overline{\zeta}}\,)}},$

\item[\bf{H2}] Let  $L^{s}$ denote the holomorphic line bundle on
    $T\PP$ defined by the transition function
    $g_{01}=\rm{exp}(-s\eta/\zeta)$
and let
 $L^{s}(m)\equiv
L^{s}\otimes\pi\sp*\mathcal{O}(m)$ be similarly defined in
terms of the transition function
$g_{01}=\zeta^m\exp{(-s\eta/\zeta)}$. Then $L^2$ is trivial on
${\mathcal{C}}$ and $L\sp1(\charge-1)$ is real.

\item[\bf{H3}] $H^0({\mathcal{C}},L^{s}(\charge-2))=0$ for
    $s\in(0,2)$.
\end{itemize}
We have already encountered N1, N3(a) and the reality of the spectral curve H1; the remaining conditions, which encode the regularity of the
gauge fields, restrict the curve. The poles of the Nahm data at $s=0,2$ corresponds to a flow on the Jacobian that begins and ends on the $\Theta$-divisor. This identification follows by observing that
$\deg L^s(\charge-2) = \charge(\charge-2) = g(\mathcal{C})-1$, and so we may view $L^s(\charge-2)$ as a straight line curve in $W_{g-1}$ (the image of effective degree-$(g-1)$ divisors in the Jacobian via the Abel-Jacobi map) of period 2 in the direction $\psquare{\eta/\zeta} \in T_{L^s(k-2)}\Pic^{g-1}(\mathcal{C}) \cong H^1(\mathcal{C}, \mathcal{O}_\mathcal{C})$, not intersecting the theta divisor for any $s \in (0,2)$. (Changing the basepoint of the AJ map which maps the degree-$(g-1)$ divisor into $W_{g-1} \subset \Jac(\mathcal{C})$ just corresponds to a translation, so this statement is basepoint independent.) Via the Riemann vanishing theorem \cite[p298]{Farkas1992} only points
on the $\Theta$-divisor have $H^0({\mathcal{C}},L^{s}(n-2))\ne0$. Now 
$[\eta/\zeta]$ corresponds to the set of Laurent tails $\pbrace{r_{0_j}}$, $r_{0_j} = \frac{\eta_j(0)}{\zeta}$, 
where  $(\zeta, \eta_j(\zeta))$ are the preimages of $\zeta$ in $\mathcal{C}$ and we set $0_j = (0, \eta_j(0))$, $\infty_j=\mathfrak{J}(0_j)$.  
Under Serre duality this gives the linear map on holomorphic differentials
$\omega \mapsto \sum_j \Res_{0_j} \pround{\frac{\eta}{\zeta}\omega}$ .
We get the coordinates of a vector $\bm{U}$ in the Jacobian viewed as $\mathbb{C}^g/\Lambda$ by fixing a canonical homology basis $\pbrace{\mathfrak{a}_j, \mathfrak{b}_j}$ and basis of $\mathfrak{a}$-canonically normalised differentials $\pbrace{\nu_j}$ (i.e. $\int_{\mathfrak{a}_j} \nu_k = \delta_{jk}$, $\int_{\mathfrak{b}_j} \nu_k = \tau_{jk}$, where $\tau$ is the Riemann matrix). 
Then the $j$-th entry of the vector in $\mathbb{C}^g$ corresponding to $[\eta/\zeta]$ is 
\[
U_j = \sum_{l} \Res_{0_l}\pround{\frac{\eta}{\zeta} \nu_j}
= \frac{1}{2 \pi i} \oint_{\mathfrak{b}_j} \gamma_0 = \frac{1}{2 \pi i} \oint_{\mathfrak{b}_j} \gamma_\infty.
\] 
The final expressions come from using the reciprocity law for differentials and by choosing a differential of the second kind
$\gamma_0$ (respectively $\gamma_\infty$) such that $ \gamma_0 \sim d(\eta/\zeta)$ around $0_j$ ($\gamma_\infty \sim d(-\eta/\zeta)$ around $\infty_j$) and $\gamma_0$ ($\gamma_\infty$) is holomorphic everywhere else. 
These final expressions were introduced by Ercolani and Sinha \cite{Ercolani1989} and we call $\bm{U}$ the \emph{Ercolani-Sinha} vector.
\begin{lemma}[Ercolani-Sinha Constraints \cite{Ercolani1989,Houghton2000,Braden2010}] The following are equivalent:
\begin{enumerate}[(i)]
 \item $L\sp2$ is trivial on $\mathcal{C}$.

\item There exists a 1-cycle
$\mathfrak{es}=\boldsymbol{n}\cdot{\mathfrak{a}}+
\boldsymbol{m}\cdot{\mathfrak{b}}$ for $(\bm{n}, \bm{m}) \in \mathbb{Z}^{2g}$ such that for every holomorphic
differential $\Omega=\left({\beta_0\eta^{\charge-2}+\beta_1(\zeta)\eta^{\charge
-3}+\ldots+\beta_{\charge-2}(\zeta)}\right) d\zeta/({{\partial {P}}/{\partial
\eta}})$,
\begin{equation}
\oint\limits_{\mathfrak{es}}\Omega=-2\beta_0,\label{HMREScond}
\end{equation}
\item
$\displaystyle{
2\boldsymbol{U}\in \Lambda\Longleftrightarrow
\boldsymbol{U}=\frac{1}{2\pi\imath}\left(\oint_{\mathfrak{b}_1}\gamma_{\infty},
\ldots,\oint_{\mathfrak{b}_g}\gamma_{\infty}\right)\sp{T}= \frac12
\boldsymbol{n}+\frac12\tau\boldsymbol{m} .}$

\end{enumerate}
\end{lemma}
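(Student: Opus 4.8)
The plan is to run the equivalences through condition (iii), whose final expression has already been identified above with the Ercolani--Sinha vector $\boldsymbol{U}$; it therefore suffices to show that each of (i) and (ii) is equivalent to the lattice membership $2\boldsymbol{U}\in\Lambda$, where $\Lambda=\mathbb{Z}^g+\tau\mathbb{Z}^g$.

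For (i) $\Leftrightarrow$ (iii) I would argue cohomologically. Restricting $L^s$ to $\mathcal{C}$ gives a degree-zero line bundle specified by the transition function $\exp(-s\eta/\zeta)$, and under the exponential sheaf sequence $0\to\mathbb{Z}\to\mathcal{O}_\mathcal{C}\to\mathcal{O}_\mathcal{C}^\ast\to0$ its class in $\Jac(\mathcal{C})=H^1(\mathcal{C},\mathcal{O}_\mathcal{C})/H^1(\mathcal{C},\mathbb{Z})$ is represented by the logarithm $-s\eta/\zeta$ of that transition function. Pairing this class against the normalised differentials $\nu_j$ via Serre duality reproduces exactly the residue sum $\sum_l\Res_{0_l}\pround{\tfrac{\eta}{\zeta}\nu_j}=U_j$ computed before the lemma, so that $\ev{L^s}{\mathcal{C}}$ corresponds to $-s\boldsymbol{U}\pmod\Lambda$ in $\mathbb{C}^g/\Lambda$. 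Hence $L^2$ is trivial on $\mathcal{C}$ precisely when $2\boldsymbol{U}\in\Lambda$; writing a general lattice vector as $\boldsymbol{n}+\tau\boldsymbol{m}$ with $(\boldsymbol{n},\boldsymbol{m})\in\mathbb{Z}^{2g}$ yields $\boldsymbol{U}=\tfrac12\boldsymbol{n}+\tfrac12\tau\boldsymbol{m}$, which together with the already-derived formula $\boldsymbol{U}=\tfrac{1}{2\pi i}(\oint_{\mathfrak{b}_1}\gamma_\infty,\dots,\oint_{\mathfrak{b}_g}\gamma_\infty)^T$ is precisely (iii).

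For (iii) $\Leftrightarrow$ (ii) the computation is more direct. Expanding an arbitrary holomorphic differential in the canonical basis as $\Omega=\sum_k c_k\nu_k$ and integrating over the integral cycle $\mathfrak{es}=\boldsymbol{n}\cdot\mathfrak{a}+\boldsymbol{m}\cdot\mathfrak{b}$, the normalisation $\oint_{\mathfrak{a}_j}\nu_k=\delta_{jk}$, $\oint_{\mathfrak{b}_j}\nu_k=\tau_{jk}$ together with the symmetry of $\tau$ gives $\oint_{\mathfrak{es}}\Omega=(\boldsymbol{n}+\tau\boldsymbol{m})\cdot\boldsymbol{c}$. Substituting the lattice relation of (iii), $\boldsymbol{n}+\tau\boldsymbol{m}=2\boldsymbol{U}$, turns this into $2\boldsymbol{U}\cdot\boldsymbol{c}$, and by linearity of the residue map $\boldsymbol{U}\cdot\boldsymbol{c}=\sum_l\Res_{0_l}\pround{\tfrac{\eta}{\zeta}\Omega}$. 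It remains to evaluate this residue. Since $\Omega=Q\,d\zeta/(\partial P/\partial\eta)$ with $Q=\beta_0\eta^{\charge-2}+\dots$ and the fibre over $\zeta=0$ consists of the simple points $0_l=(0,\eta_l(0))$ with $P(0,\eta)=\prod_l(\eta-\eta_l(0))$, using $\zeta$ as a local coordinate gives $\Res_{0_l}\pround{\tfrac{\eta}{\zeta}\Omega}=\eta_l(0)\,Q(0,\eta_l(0))/(\partial P/\partial\eta)(0,\eta_l(0))$. The elementary partial-fractions identity $\sum_l R(\eta_l)/P'(\eta_l)=[\text{leading coefficient of }R]$, applied with $R(\eta)=\eta\,Q(0,\eta)$ and $P=P(0,\cdot)$, isolates the coefficient $\beta_0$ of $\eta^{\charge-1}$, so that $\boldsymbol{U}\cdot\boldsymbol{c}=\pm\beta_0$; the overall sign is pinned down by the orientation conventions for $\mathfrak{es}$ and the Serre pairing (equivalently by the sign in $\gamma_\infty\sim d(-\eta/\zeta)$) already fixed above, and with those conventions one obtains $\oint_{\mathfrak{es}}\Omega=-2\beta_0$. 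Conversely, if some integral cycle satisfies (ii), the same period formula forces $(\boldsymbol{n}+\tau\boldsymbol{m})\cdot\boldsymbol{c}=-2\boldsymbol{U}\cdot\boldsymbol{c}$ for every $\boldsymbol{c}$, whence $2\boldsymbol{U}\in\Lambda$, recovering (iii) after relabelling $(\boldsymbol{n},\boldsymbol{m})\mapsto(-\boldsymbol{n},-\boldsymbol{m})$.

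The main obstacle is the first equivalence: the rest is bilinear algebra and a one-line residue identity, but (i) $\Leftrightarrow$ (iii) rests on correctly identifying the image of $\ev{L^s}{\mathcal{C}}$ in $\Jac(\mathcal{C})$ with $-s\boldsymbol{U}$ through the exponential sequence and Serre duality, which is where all the normalisation and sign conventions must be made consistent.
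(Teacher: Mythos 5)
The paper itself offers no proof of this lemma: it is imported from the cited works \cite{Ercolani1989,Houghton2000,Braden2010}, so there is no internal argument to compare against, and your proposal must be judged on its own. Your route is in fact the standard one from those references, and it is essentially sound: the equivalence (i) $\Leftrightarrow$ (iii) by identifying $\ev{L^s}{\mathcal{C}}$, via the exponential sheaf sequence, with the linear flow $s\boldsymbol{U} \pmod\Lambda$ in $H^1(\mathcal{C},\mathcal{O}_\mathcal{C})/H^1(\mathcal{C},\mathbb{Z})$ (using the paper's Serre-duality description of $[\eta/\zeta]$ as the residue functional), and the equivalence (ii) $\Leftrightarrow$ (iii) by expanding $\Omega=\sum_k c_k\nu_k$, using symmetry of $\tau$ to get $\oint_{\mathfrak{es}}\Omega=(\boldsymbol{n}+\tau\boldsymbol{m})\cdot\boldsymbol{c}$, and evaluating $\boldsymbol{U}\cdot\boldsymbol{c}$ by the Lagrange/partial-fraction identity $\sum_l R(\eta_l)/P'(\eta_l)=\beta_0$ for $R(\eta)=\eta\,Q(0,\eta)$ of degree $\charge-1$.

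Two caveats. First, the residue computation silently assumes the fibre over $\zeta=0$ is unramified (the $\eta_l(0)$ distinct); the paper's surrounding discussion makes the same implicit assumption, and it can be arranged by a rotation, but it should be said. Second, and more substantively, your sign handling is not self-consistent: with the paper's definition $U_j=\sum_l\Res_{0_l}\pround{\frac{\eta}{\zeta}\nu_j}$, your own computation forces $\oint_{\mathfrak{es}}\Omega=+2\beta_0$ whenever $2\boldsymbol{U}=\boldsymbol{n}+\tau\boldsymbol{m}$ and $\mathfrak{es}=\boldsymbol{n}\cdot\mathfrak{a}+\boldsymbol{m}\cdot\mathfrak{b}$, so the cycle realizing (ii) is $-\mathfrak{es}$; one cannot both keep the residue formula and declare the sign "pinned by convention" with the same $(\boldsymbol{n},\boldsymbol{m})$. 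The discrepancy is real and is resolved either by negating the cycle (as you do in the converse direction) or by carefully tracking the reciprocity-law sign relating the $\gamma_\infty$-periods defining $\boldsymbol{U}$ to the residue sum at the points $0_l$. Since (ii) is an existence statement, an overall sign of $\mathfrak{es}$ does not disturb the equivalence of (i), (ii), (iii), so this is a bookkeeping blemish rather than a gap; but fix it by choosing one convention at the outset and carrying it through both directions.
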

Thus for a monopole spectral curve  we require that  $\boldsymbol{U}$ is a half-period.
Further, from {H3} the vector $\boldsymbol{U}$ should be {\it primitive}, i.e. $s\boldsymbol{U}\not\in\Lambda$ for $s\in(0,2)$. We call $\mathfrak{es}$ the \emph{Ercolani-Sinha cycle} and if it exists for a curve $\mathcal{C}$ then it is unique. The Ercolani-Sinha Constraints place
$g$ transcendental constraints on the curve $\mathcal{C}$ \cite{Braden2021}.

We have then that the spectral curve of a Euclidean $\mathfrak{su}(2)$ BPS monopole comes equipped with a particular half-period. From $\mathfrak{J}_\ast 
\mathfrak{es}=-\mathfrak{es}$ or $\mathfrak{J}_\ast 
\boldsymbol{U}=-\boldsymbol{U}$ then $2\boldsymbol{U}$ may be identified with the imaginary lattice point of Hitchin \cite[p164]{Hitchin1983}.
Although we do not know how to impose the Ercolani-Sinha Constraints, given a curve satisfying them it is algorithmic to use numerical methods to compute the integers $(\bm{n}, \bm{m})$.

\subsection{Symmetry reduction and hyperelliptic curves}

One may ask about the action of $\Aut(\mathcal{C})$ on $\mathfrak{es}$, or equivalently the action on $\bm{U}$. In \cite{Houghton2000} it was shown that $\mathfrak{es}$ is invariant under the $A_4$ tetrahedral subgroup of the $S_4 \times C_3$ full automorphism group of the tetrahedral 3-monopole spectral curve, and more generally it was shown in \cite{Houghton2000, Braden2011} that $\mathfrak{es}$ is invariant under the action of any rotation. One may indeed verify that for
the tetrahedral 3-monopole $\mathfrak{es}$ is invariant under the $S_4$ subgroup of the automorphism group, but not under the $C_3$ action $(\zeta,\eta)\mapsto (\zeta,e^{2\pi i/3} \eta)$ which does not correspond to any transformation in $E(3)$.
In good cases the invariance of the spectral curve $\mathcal{C}$ and the Ercolani-Sinha vector under $G\le\Aut(\mathcal{C})$
means that the flow on $\mathcal{C}$ comes from the pull back of a flow
on $\mathcal{C}/G$: this allows simplifications (see \cite{Braden2011,Braden2011a}).

We have now discussed spectral curves, some of their symmetries, and the flows that are needed to describe BPS monopoles, but
before proceeding it is helpful to note
a less obvious feature of the spectral curve $\mathcal{C}$:
 it is non-hyperelliptic for charge-$\charge\ge3$, a fact which follows from classical algebraic geometry \cite[\S2]{Braden2023}.
One immediate consequence is that generic monopoles cannot be described by familiar integrable systems whose spectral curves are hyperelliptic. In what follows we shall show however that some monopoles constrained by symmetry are so expressible.

\section{The Affine Toda Equations}\label{sec: affine toda}
In what follows we shall refer to the affine Toda equations and this section recalls these and a number of their properties. We begin by describing the untwisted algebras and then discuss their reduction by folding. Folding is a procedure for constructing non-simply laced Lie algebras from simply laced algebras by means of a
Dynkin diagram automorphism $\tau$, for both finite and affine algebras. Such
an automorphism is an outer automorphism of the algebra and the $\tau$-invariant subalgebra is then the desired algebra. The untwisted finite and affine algebras are most easily seen, arising from a diagram 
automorphism of the finite algebra; describing the twisted affine algebras is somewhat more involved. For the purposes of this paper
we may circumvent these constructions and just use the resulting Cartan matrices together with an observation of Olive and Turok
\cite{Olive1983}:
as the Toda equations are based on the Cartan matrices and folding arises by a diagram automorphism, folding also
extends to these equations and we obtain the Toda equations of the folded algebra.

\subsection{The equations for \secmath{\mathfrak{g}\sp{(1)}}}
Let $\mathfrak{g}$ be a simple Lie algebra (over $\mathbb{C}$) of rank $r$ and $\mathfrak{h}\subset \mathfrak{g}$ 
a fixed Cartan subalgebra. We have the root space decomposition $\mathfrak{g}=\mathfrak{h}
\bigoplus\left(\oplus_{\alpha\in\Phi}\mathfrak{g}_\alpha\right)$
where $\Phi\subset\mathfrak{h}\sp\ast$ denotes the set of roots for the pair $(\mathfrak{g},\mathfrak{h})$. Endow $\mathfrak{h}\sp\ast$ with
the  inner product $(\  ,  \ )$ and let $W$ be the associated Weyl group. By averaging we may always take  $(\  ,  \ )$ to be Weyl-invariant. If $\kappa$ is the Killing form on $\mathfrak{g}$
the isomorphism $\nu:
{\mathfrak{h}}\rightarrow{\mathfrak{h}}\sp\ast$ defined by
$\kappa(h_1 ,  h_2 )=\nu(h_1)(h_2)$ (for $h_{1,2}\in{\mathfrak{h}}$)
enables the identification $\kappa(h_1 ,  h_2 )=(\nu(h_1),\nu(h_2))/c$
where $c$ is a constant to be defined below.
Choose a set of simple roots $\Delta:=\{\alpha_1,\ldots,\alpha_r\}\subset\Phi$.
To each $\alpha\in\Phi$ set
 $\epsilon_\alpha:=2/{(\alpha,\alpha)}$, $\alpha\sp\vee :=\epsilon_\alpha \alpha:={2\alpha}/{(\alpha,\alpha)}$, the dual root.
  We write $\epsilon_i:={2}/{(\alpha_i ,\alpha_i)}$
 for $\alpha_i\in\Delta$. The Cartan matrix is\footnote{Conventions here
 differ between authors: we follow \cite[p71]{Carter2005} which is the transpose
 of \cite[X Lemma 3.3]{Helgason1978}.} $K:=(a_{ij})$ with $ a_{ij}:=(\alpha_i\sp\vee,\alpha_j)$.
The corresponding Chevalley basis of $\mathfrak{g}$ consists of
 $\{H_i:=H_{\alpha_i}\}_{\alpha\in\Delta}$, the basis of $\mathfrak{h}$, and $\{E_\beta\}_{\beta\in\Phi}$, 
then satisfies
\begin{align*} 
[H_\alpha, E_\beta]&=(\alpha\sp\vee,\beta)\, E_\beta,\quad
[E_\alpha,E_\beta]&=
\begin{cases} H_\alpha
&\text{if }\alpha+\beta=0,\\
N_{\alpha,\beta}E_{\alpha+\beta}&\text{if }\alpha+\beta\in\Phi,\\
0&\text{if }\alpha+\beta\ne0\text{ and }\alpha+\beta\not\in\Phi.
\end{cases}
\end{align*}
Here (for all $\alpha\in\Phi$) we have
$H_\alpha=\sum_{i=1}\sp{r}c_i H_i$
where the $c_i$ are defined by $\alpha\sp\vee=\sum_{i=1}\sp{r}c_i \alpha_i\sp\vee$; the $c_i\in\mathbb{Z}$ as $\{\alpha_i\sp\vee\}$ is a basis for $\Phi\sp\vee:=\{\alpha\sp\vee\,|\, \alpha\in\Phi\}$.
We will not need to further specify $N_{\alpha,\beta}$ for what follows.
One finds $\nu(H_\alpha)=c\, \alpha\sp\vee$.
Let
$\Theta=\sum_{\alpha\in\Delta} {n_\alpha}\, \alpha$ be the highest root of $\Phi$ and set 
$\bar\Delta=\Delta\cup\{-\Theta\}$ and also $n_{-\Theta}=1$. The highest root is always a long root and we normalize so that $(\Theta,\Theta)=2$. Further denoting $\alpha_0=-\Theta$ we have the extended Cartan matrix $\overline{K}_{ij}$ for $i,j\in 0,\ldots,r$ which has right null-vector
$n_i$ ($\equiv n_{\alpha_i}$) and left null-vector $m_i=n_i/\epsilon_i$ ($\equiv m_{\alpha_i}$). 
The Coxeter number for $\mathfrak{g}$ is then $h=\sum_{\alpha\in \bar\Delta}n_\alpha$ with the dual Coxeter number being
$g=\sum_{\alpha\in \bar\Delta}m_\alpha$; the constant $c$ introduced earlier is then $c=2g$ for all root systems apart from $A_{2l}\sp{(2)}$
which has three root lengths.
The Cartan matrix for the the untwisted affine Kac-Moody algebra $\mathfrak{g}\sp{(1)}$ is then $\overline{K}$.

The key property that we now exploit\footnote{\cite{Bogoyavlensky1976} calls such an admissible root system.} is that the difference of any two vectors in $\Delta$ or $\bar\Delta$ (or the roots associated with the twisted constructions below) is not a root. In particular \cite[(2.19)]{Olive1993}, upon setting
$$E=\sum_{\alpha\in \bar\Delta} \sqrt{m_\alpha}\, E_\alpha,\quad
E\sp\dagger=\sum_{\alpha\in \bar\Delta} \sqrt{m_\alpha}\, E_{-\alpha}.
$$
we have (noting that  $\theta\sp\vee=\theta= \sum_{\alpha\in\Delta}  {n_\alpha}\alpha=\sum_{\alpha\in\Delta}  {m_\alpha}\alpha\sp\vee$) the vanishing of
$$[E,E\sp\dagger]=\sum_{\alpha\in\Delta}  {m_\alpha}\,[E_\alpha, E_{-\alpha}]+[ E_{-\Theta},E_\Theta]
=\sum_{\alpha\in\Delta}  {m_\alpha}\,H_\alpha -H_\Theta
=0.$$
We now give an ansatz for a Lax pair that encodes the affine Toda equations for $\mathfrak{g}\sp{(1)}$ in a form that makes connection with (\ref{nahmlax}). 
\begin{lemma} Nahm's equations for the Lax matrices (\ref{nahmlax})  become the affine Toda equations with the following ansatz:
let $T_4=0$ (which  means $\alpha=\alpha\sp\dagger$) and taking\footnote{This should be viewed as a suggestive notation; when we fix a representation $\rho$ of the algebra below we will have $\rho(E_\alpha)\sp\dagger =\rho(E_{-\alpha})$.} $\phi=\phi\sp\dagger\in\mathfrak{h}$, $E_\alpha\sp\dagger =E_{-\alpha}$, set
\begin{equation}\label{nahmansatz}
 \beta=T_1+i T_2=e^{\phi/2} E e^{-\phi/2},\
\beta\sp\dagger=-T_1+i T_2=e^{-\phi/2} E\sp\dagger e^{\phi/2},\
\alpha+\alpha\sp\dagger=2i T_3 ={\dot \phi}.
\end{equation}
\end{lemma}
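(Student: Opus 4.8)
The plan is to substitute the ansatz directly into the Lax form $\dot L=[L,M]$ of Nahm's equations established in (\ref{integrability}), and to read off the resulting ODE for $\phi$. Since $T_4=0$ forces $\alpha=\alpha^\dagger$, definition (\ref{nahmlax}) gives $\alpha=\tfrac12\dot\phi$, so that
\[
L=\beta-\dot\phi\,\zeta-\beta^\dagger\zeta^2,\qquad M=-\tfrac12\dot\phi-\beta^\dagger\zeta .
\]
Both $\dot L$ and $[L,M]$ are then polynomials in $\zeta$, and Nahm's equations are equivalent to the equality of their coefficients power by power. So the first step is simply to collect these coefficients.

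The $\zeta^3$ term of $[L,M]$ is $[-\beta^\dagger,-\beta^\dagger]=0$ and hence vacuous. The $\zeta^0$ and $\zeta^2$ coefficients give
\[
\dot\beta=\tfrac12[\dot\phi,\beta],\qquad \dot{\beta^\dagger}=\tfrac12[\beta^\dagger,\dot\phi],
\]
and I would note that these are Hermitian conjugates of one another (using $\phi=\phi^\dagger$, $\dot\phi=\dot\phi^\dagger$), so only one is independent. Moreover each is a \emph{kinematic} identity satisfied automatically by the ansatz: differentiating $\beta=e^{\phi/2}Ee^{-\phi/2}$ with $E$ constant, and using that $\phi,\dot\phi\in\mathfrak h$ commute so that $\tfrac{d}{ds}e^{\pm\phi/2}=\pm\tfrac12\dot\phi\,e^{\pm\phi/2}$, reproduces $\dot\beta=\tfrac12[\dot\phi,\beta]$ exactly. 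Hence the sole dynamical content sits in the $\zeta^1$ coefficient, which reads
\[
\ddot\phi=[\beta,\beta^\dagger].
\]

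The remaining step is to evaluate the right-hand side. Conjugating $E=\sum_{\alpha\in\bar\Delta}\sqrt{m_\alpha}\,E_\alpha$ by $e^{\pm\phi/2}$ and using $[\phi,E_\alpha]=\alpha(\phi)E_\alpha$ gives $\beta=\sum_{\alpha\in\bar\Delta}\sqrt{m_\alpha}\,e^{\alpha(\phi)/2}E_\alpha$ and $\beta^\dagger=\sum_{\alpha\in\bar\Delta}\sqrt{m_\alpha}\,e^{\alpha(\phi)/2}E_{-\alpha}$. Expanding the commutator, the admissibility property recalled above — that the difference of two distinct elements of $\bar\Delta$ is never a root — forces $[E_\alpha,E_{-\gamma}]=0$ for $\alpha\neq\gamma$, so only the diagonal terms $[E_\alpha,E_{-\alpha}]=H_\alpha$ survive and
\[
[\beta,\beta^\dagger]=\sum_{\alpha\in\bar\Delta}m_\alpha\,e^{\alpha(\phi)}H_\alpha .
\]
Thus $\ddot\phi=\sum_{\alpha\in\bar\Delta}m_\alpha e^{\alpha(\phi)}H_\alpha$, which is precisely the affine Toda equation for $\mathfrak g^{(1)}$, completing the reduction.

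I expect the real work to lie in the bookkeeping rather than in any single hard idea: keeping the Hermitian-conjugation relations straight, exploiting $[\phi,\dot\phi]=0$ to render the $\zeta^0,\zeta^2$ identities trivial, and matching signs and the weights $m_\alpha$ against the standard normalisation of the affine Toda system. The one genuine structural input, without which the whole scheme collapses, is the admissibility property: it is exactly what collapses the a priori complicated double sum in $[\beta,\beta^\dagger]$ onto the diagonal Toda potential $\sum_{\alpha\in\bar\Delta}m_\alpha e^{\alpha(\phi)}H_\alpha$.
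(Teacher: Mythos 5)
Your proof is correct and takes essentially the same route as the paper's: the paper likewise reduces Nahm's equations under the ansatz to the single dynamical equation $\ddot\phi=2i\dot T_3=[\beta,\beta^\dagger]$ (the remaining equations being automatic for the ansatz, a point you make more explicit through the $\zeta^0,\zeta^2$ coefficients of the Lax equation (\ref{integrability})), and then uses $\Ad_{e^{\phi}}E_\alpha=e^{\alpha(\phi)}E_\alpha$ together with the admissibility of $\bar\Delta$ to collapse the commutator to $\sum_{\alpha\in\bar\Delta}m_\alpha e^{\alpha(\phi)}H_\alpha$. The only substantive difference is that the paper continues past this Lie-algebra-valued form, applying the isomorphism $\nu$ and the fundamental weights to obtain the coordinate form (\ref{eqntoda}) with the extended Cartan matrix, which is needed later in the paper but not for the lemma as stated.
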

\begin{proof}
The reality of $q$ and $p=\dot q$ is equivalent to the hermiticity requirements that
$T_a\sp\dagger=-T_a$.
Then Nahm's equations are (see also \cite[(3.5)]{Braden2018})
$$
\ddot \phi=2i\dot{T}_3=
[\beta, \beta\sp\dagger]=e^{-\phi/2}[ e^{\phi}E  e^{-\phi} ,E\sp\dagger ]   e^{\phi/2}
$$
which are the Toda equations. Setting $h=e^{\phi}$ we have
\begin{align*} 
\ddot \phi&=[ e^{\phi}E  e^{-\phi} ,E\sp\dagger ] 
\Longleftrightarrow
\frac{d}{ds} \left( \dot{h} h\sp{-1} \right)= \left[ h E h\sp{-1}, E\sp\dagger\right].
\end{align*}
Using
$e^{\phi}E_\alpha  e^{-\phi}=\Ad_{e^{\phi}}E_\alpha 
=e^{\alpha(\phi)}E_\alpha$ and our earlier remarks about $E$ this becomes
$$\ddot \phi=\sum_{i=1}\sp{r}m_i\left[ e\sp{\alpha_i(\phi)}-
 e\sp{-\Theta(\phi)}\right]H_i =
 \sum_{\alpha\in \bar\Delta}m_\alpha \, e\sp{\alpha(\phi )}H_\alpha$$
which may be cast into a (perhaps) more familiar form under the isomorphism $\nu$. With $\varphi =\nu(\phi)/c$ and noting then that $\alpha(\phi)=(\alpha,\varphi )$ we have
\begin{equation}\label{generaltoda}
  \ddot \varphi  =  \sum_{\alpha\in \Delta}n_\alpha \alpha\, e\sp{(\alpha,\varphi )}
  -\Theta\, e\sp{-(\Theta,\varphi )}
  =\sum_{\alpha\in \bar\Delta}n_\alpha \alpha\, e\sp{(\alpha,\varphi )}.
\end{equation}
If we introduce the fundamental weights $\{\lambda_j\}_{j=1}\sp{r}$ for
which $(\lambda_j,\alpha_j\sp\vee)=\delta_{ij}$ and set $\varphi =\sum_{i=1}\sp{r} \varphi _{i}\epsilon_i\lambda_i$ calling $\varphi _{0}=-(\Theta,\varphi )=-\sum_{i=0}\sp{r}n_i \varphi _i$ we have (for $i\in\{0,\ldots,r\}$)
\begin{equation}\label{eqntoda}
    {\ddot \varphi }_i = \sum_{j=0}\sp{r} {\overline K}_{ij}\sp{T}\, m_j\, e\sp{\varphi _j}.
\end{equation}
Clearly the variable $\varphi _0$ is redundant but makes the equations more symmetric. Further $\sum_{i=0}\sp{r}m_i p_i=0$ (where $p_i=\dot \varphi _i$) for consistency; for the $A_r\sp{(1)}$ case this is usually interpreted as the momentum of the centre of mass. The associated Hamiltonian is $H=(p,p)/2 +\sum_{\alpha\in \bar\Delta}n_\alpha\, e\sp{(\alpha,\varphi )}$.
\end{proof}

Thus far we have only used the properties of the root system of the finite dimensional Lie algebra $\mathfrak{g}$. We will not need the full construction of the affine algebra $\mathfrak{g}\sp{(1)}$ but note that the standard construction(see \cite{Kac1990,Carter2005})
proceeds via the loop algebra $\mathbb{C}[w,w\sp{-1}]\otimes_\mathbb{C}\mathfrak{g} $ and has generators $e_i=1\otimes E_{\alpha_i}$, $f_i=1\otimes E_{-\alpha_i}$,
$h_i=1\otimes H_{i}$ ($i=1,\ldots,r$) together with
$e_0=w\otimes E_{-\Theta}$, $f_0=w\sp{-1}\otimes E_{\Theta}$ along with $h_0$ and some other generators whose exact form we will not need to specify. If $D(\mathfrak{g})$ (respectively $D(\mathfrak{g}\sp{(1)})$) denotes the Dynkin diagram of $\mathfrak{g}$ ($\mathfrak{g}\sp{(1)}$) with diagram automorphisms
$\Aut(D(\mathfrak{g}))$ ($\Aut(D(\mathfrak{g}\sp{(1)}))$)
we have
$$\Aut(D(\mathfrak{g}))\le \Aut(D(\mathfrak{g}\sp{(1)}))\le \Aut(\Phi(\mathfrak{g}))$$
as the roots of $D(\mathfrak{g})$ are a subset of those of $D(\mathfrak{g}\sp{(1)})$ which are themselves a subset of $\Phi(\mathfrak{g})$. Let $\tau\in\Aut(D(\mathfrak{g}))$ have order
$N$; this fixes
$\Theta$. We may extend $\tau$ to a \emph{twisted automorphism} $\hat{\tau}$ of 
$\mathfrak{g}\sp{(1)}$ with an action on the generators above given by
$\hat{\tau}(w\sp{i}\otimes x)= \delta\sp{-i} w\sp{i}\otimes \tau( x)$, where $\delta$ is a primitive $N$-th root of unity and $x\in\mathfrak{g}$. Then we have the the twisted affine algebra $\mathfrak{g}\sp{(N)}:=(\mathfrak{g}\sp{(1)})\sp{\hat\tau}$ as the
fixed algebra;  there is a range of notation for these.
For our examples below we note  $\Aut(D(A_{r}))\cong C_2$ and 
$\Aut(D(A_{r}\sp{(1)}))\cong D_{r+1}$
for $r\ge2$;  for $r=1$ we have $\Aut(D(A_{1}))$ trivial and 
$\Aut(D(A_{1}\sp{(1)}))\cong C_2$. For $A_r$ ($r\ge2$) we have  $X\mapsto -X\sp{T}$ is an outer automorphism \cite[IX Theorem 5]{Jacobson1979}.

\subsection{$A_{\charge-1}\sp{(1)}$}
Here $n_i=m_i=1$ and (\ref{nahmansatz}) yields
Sutcliffe's ansatz \cite{Sutcliffe1996a}. Expressing the roots in terms of a Euclidean basis 
$\alpha_i=e_i-e_{i+1}$ ($i=1,\ldots,\charge-1$), $|e_i|^2=1$, we have
(with indices being taken mod $\charge$ so $q_\charge=q_0$) in the
$\charge$-dimensional representation
\begin{align}\label{sutcliffeansatz}
T_1+iT_2&=\begin{pmatrix} 0&e\sp{(q_1-q_2)/2}&0&\ldots&0\\
0&0&e\sp{(q_2-q_3)/2}&\ldots&0\\
\vdots&&&\ddots&\vdots\\
0&0&0&\ldots&e\sp{(q_{\charge-1}-q_\charge)/2}\\
w\,e\sp{(q_\charge-q_1)/2}&0&0&\ldots&0
\end{pmatrix} \nonumber\\
T_1-iT_2&=-\begin{pmatrix}0&0&\ldots&0&e\sp{(q_\charge-q_1)/2}/w\\
e\sp{(q_1-q_2)/2}&0&\ldots&0&0\\
0&e\sp{(q_2-q_3)/2}&\ldots&0&0\\
\vdots&&\ddots&&\vdots\\
0&0&\ldots&e\sp{(q_{\charge-1}-q_\charge)/2}&0
\end{pmatrix}\\
T_3&=-\frac{i}{2}\begin{pmatrix} p_1&0&\ldots&0\\
0&p_2&\ldots&0\\
\vdots&&\ddots&\vdots\\
0&0&\ldots&p_\charge
\end{pmatrix}\nonumber
\end{align}
where $p_i$, $q_i$ are real and we see $\sum_{i=1}\sp{\charge}p_i=0$; for this algebra the
center of mass coordinate, corresponding to the centre of the monopole,
may be reinstated by working with the Lie algebra $\mathfrak{gl}(\charge,\mathbb{C})$.
Setting $w=1$ the equations of motion are $\dot p_i=e\sp{q_{i}- q_{i+1}}-e\sp{q_{i-1}-q_i}$
which follow from the affine Toda Hamiltonian
\begin{equation*}\label{affToda}
H=\frac12\left(p_1\sp2+\ldots+p_\charge\sp2\right)-\left[e\sp{q_1-
q_2}+e\sp{q_2- q_3}+\ldots+e\sp{q_{\charge}- q_1} \right].
\end{equation*}
In terms of the Lax matrix
$$\frac12\Tr L^2(\zeta)=\frac12\Tr\left[\beta -(\alpha+\alpha\sp\dagger)\zeta-\beta\sp\dagger \zeta^2\right]^2
=\zeta^2\Tr\left(\frac12 {\dot\phi}^2 -e^{\phi}E  e^{-\phi} E\sp\dagger \right):=\zeta^2 H
$$
upon using $0=\Tr E^2=\Tr \dot \phi(\beta-\beta\sp\dagger)$.
This Hamiltonian is not bounded below  corresponding to a potential of the wrong sign which is necessary as the monopole boundary conditions require $T_i\sim \rho_i/s $ as $s\sim 0$, where $\{\rho_i\}$ is the irreducible $\charge$-dimensional representation. We also define the Flaschka variables $a_i =  e\sp{(q_i-q_{i+1})/2}=e\sp{\varphi_i/2}$ and $b_i=  p_i$ (with indices
taken mod $\charge$). Then
\begin{equation}\label{flaschkaeom}
\dot a_i= \frac12 a_i(b_i-b_{i+1}),\quad
\dot b_i=a_{i}^2 -a_{i-1}^2 
\end{equation}
again following from $
H=\frac12 \sum_{i=1}\sp{\charge} b_i^2-\sum_{i=0}\sp{\charge-1} a_i^2$.
In terms of Flashka variables we have upon reinstating $w$ that
\begin{align}\label{flaschkalax}
{ \tilde{L}}&=\begin{pmatrix} -b_1\zeta&a_1&0&\ldots&0&- a_0 \zeta^2/w\\
-a_1\zeta^2&-b_2\zeta&a_2&\ldots&0&0\\
\vdots&&&\ddots&\vdots\\
0&0&0&\ldots&-b_{n-1}\zeta&a_{n-1}\\
w a_0&0&0&\ldots&-a_{n-1}\zeta^2&-b_n\zeta
\end{pmatrix} \\
\det(\eta -\tilde L(\zeta)) &=\eta\sp\charge +\left(\sum_i b_i\right)\eta^{\charge-1}\zeta+\ldots + (-1)^{\charge-1} \left(\prod_{i=0}\sp{\charge -1}a_i\right)\left(\zeta\sp{2\charge}/w +(-1)\sp{\charge} w\right), \label{flaschkacurve}
\end{align}
and we identify $a_1=0$.

\subsection{$C_l\sp{(1)}$} ($l\ge2$) This is given by the folding
{$A_{2l-1}\sp{(1)}\rightarrow C_{l}\sp{(1)}$}. Given a Euclidean basis 
$\{e_i\}$ with $|e_i|^2=1/2$ we may take simple roots
$\alpha_i=e_i-e_{i+1}$ ($i=1,\ldots,l-1$), $\alpha_l=
2e_l$. Then $2=n_1=\ldots=n_{l-1}$, $n_l=1$ and
$\Theta=2e_1$. We have $1=m_0=m_1=\ldots=m_{l}$. This has Cartan matrix

\begin{equation}\label{cn1diagram}
\overline{K}=\quad \bordermatrix{& 0 & 1 & 2 & .  & .& . & l-2 & l-1 & l \cr
0 & 2&-1 \cr
1 & -2&2&-1 \cr 
2 & &-1&2&. \cr
. &&&.&.&.& \cr
 . & &&&.&.&.& \cr 
 .& &&&&.&.&.& \cr 
 l - 2 & &&&&&.&2&-1& \cr 
l-1 & &&&&&&-1&2&-2 \cr
 l& &&&&&&&-1&2}
\end{equation}

\begin{center}
\begin{figure}
\begin{tikzcd}
&&\overset{0}{\circ}\arrow[dll, no head]\arrow[drr, no head]&&
\\
\overset{1}{\circ} \arrow[r,no head]\arrow[bend right=60,leftrightarrow, rrrr] & \overset{2}{\circ} \arrow[r,no head]
\arrow[bend right=60,leftrightarrow, rr]&\circ\cdots\cdots\circ \arrow[r,no head]& \overset{2l-2}{\circ} \arrow[r,no head]& \overset{2l-1}{\circ} \\
\end{tikzcd}
\begin{tikzcd}
\overset{0}{\circ}\arrow[Rightarrow, r]& \overset{1}{\circ} \arrow[r,no head]&
\circ\cdots\cdots\circ& \overset{l-1}{\circ} \arrow[l,no head]\arrow[Leftarrow,r]
& \overset{l}{\circ}
\end{tikzcd}
\caption{The folding $A_{2l}^{(1)} \to C_l^{(1)}$. The vertices of the folded diagram are the orbits of the
graph automorphism.
The edges of the folded diagram are single unless two incident edges map to the same edge in which case these
multiple edges are retained and point from the (fixed) incident
vertex.
}
\end{figure}
\end{center}

\subsection{$A_{2l}\sp{(2)}$} This twisted affine algebra is rather exceptional having three root lengths (see \cite{Carter2005}). If we
choose $\alpha_0$ to be the longest root we have $n_0=1$, $n_1=\cdots=n_{l-1}=n_l=2$. Further taking $(\Theta,\Theta)=2$ then
$(\alpha_i,\alpha_i)=1$ ($i=1,\ldots{l-1}$), $(\alpha_l,\alpha_l)=1/2$
and then $m_0=m_1=\cdots-m_{l-1}=1$ \emph{but} $m_l=1/2$ is no longer an integer.

\begin{center}
\begin{figure}
\begin{tikzcd}
\overset{0}{\circ}\arrow[Rightarrow, r]& \overset{1}{\circ} \arrow[r,no head]&
\circ\cdots\cdots\circ& \overset{l-1}{\circ} \arrow[l,no head]\arrow[Rightarrow,r]
& \overset{l}{\circ}
\end{tikzcd}
\caption{Labelling of the simple roots of $A_{2l}\sp{(2)}$}
\end{figure}
\end{center}

The Cartan matrix for $A_{2l}\sp{(2)}$ is
\begin{equation}\label{a2n2diagram}
\overline{K}=\quad \bordermatrix{& 0 & 1 & 2 & .  & .& . & l-2 & l-1 & l \cr
0 & 2&-1 \cr
1 & -2&2&-1 \cr 
2 & &-1&2&. \cr
. &&&.&.&.& \cr
 . & &&&.&.&.& \cr 
 .& &&&&.&.&.& \cr 
l - 2 & &&&&&.&2&-1& \cr 
l-1 & &&&&&&-1&2&-1 \cr
l& &&&&&&&-2&2}
\end{equation}
The Lax matrix for $A_{4}\sp{(2)}$ which is representative for more general $l$ is
$$
L(\zeta):=\begin{pmatrix}
b_{1} \zeta  & a_{1} & 0 & 0 & -{a_{0} \zeta^{2}}/{w} 
\\
 -a_{1} \zeta^{2} & b_{2} \zeta  & a_{2}  & 0 & 0 
\\
 0 & -a_{2} \zeta^{2}  & 0 & a_{2}  & 0 
\\
 0 & 0 & -a_{2} \zeta^{2}  & -b_{2} \zeta  & a_{1} 
\\
 a_{0} w  & 0 & 0 & -a_{1} \zeta^{2} & -b_{1} \zeta  
\end{pmatrix};
$$
this yields a spectral curve
\begin{align} \nonumber
0=\det(\eta-L)&=
\eta^{5}+\left(a_{0}^{2}+2 a_{1}^{2}+2 a_{2}^{2}-b_{1}^{2}-b_{2}^{2}\right) \zeta^{2} \eta^{3}
 +\left(2 a_{0}^{2} a_{2}^{2}-a_{0}^{2} b_{2}^{2}+a_{1}^{4}
\right. \\ &\quad \left.  \label{a42curve}
+2 a_{2}^{2} a_{1}^{2}+
 2 b_{1} b_{2} a_{1}^{2}-2 b_{1}^{2} a_{2}^{2}+b_{1}^{2} b_{2}^{2}\right) \zeta^{4} \eta 
+ a_{0} a_{1}^{2} a_{2}^{2}\left( \frac{ \zeta^{10}}{w}- w \right)
\end{align}
where we observe that $a_{0} a_{1}^{2} a_{2}^{2}$ is a Casimir.

\section{Dihedrally Symmetric Monopoles and Affine Toda Equations}\label{sec: dihedral monopoles}
In the previous section we encountered Sutcliffe's ansatz (\ref{sutcliffeansatz}) that reduced Nahm's equations for charge-$\charge$ to the $A_{\charge-1}\sp{(1)}$ affine Toda equations.
The resulting spectral curve had $C_\charge$ cyclic symmetry.
This work was later strengthened to give
\begin{theorem}[{\cite[Theorem 3.1]{Braden2011}}] Any $C_\charge$ cyclically symmetric charge-$\charge$ Euclidean $\mathfrak{su}(2)$ BPS monopole is, up to an overall rotation, given by Nahm data gauge equivalent to (\ref{sutcliffeansatz}) obtained from the $\mathfrak{su}(\charge)$ affine Toda equations $A_{\charge-1}\sp{(1)}$.
\end{theorem}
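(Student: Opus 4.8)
The plan is to work entirely on the Nahm side of Hitchin's correspondence and to show that the $C_\charge$ symmetry rigidly forces the Nahm data into Sutcliffe's ansatz (\ref{sutcliffeansatz}). First I would use the rotational freedom to align the axis of the cyclic symmetry with the $x_3$-axis, so that the generating symmetry is the rotation $R$ by $2\pi/\charge$ about $x_3$, acting on the spectral variables by $(\zeta,\eta)\mapsto(\omega\zeta,\omega\eta)$ with $\omega=e^{2\pi i/\charge}$. Writing $\beta=T_1+iT_2$ and recalling that the rotation acts on $\beta$ by the overall scalar $\omega$ together with conjugation by $\rho(R)$, the assertion that the monopole is $C_\charge$-symmetric is the assertion that the rotated Nahm data is gauge equivalent to the original. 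Since we have fixed $T_4=0$, the only residual gauge freedom is by constant unitaries, so the symmetry is equivalent to the existence of a single constant $C\in U(\charge)$, independent of $s$, with
\begin{align*}
C\,\beta\, C^{-1}=\omega^{-1}\beta,\qquad [C,T_3]=0 .
\end{align*}

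The crux is to pin down $C$, and for this I would evaluate the symmetry condition on the residues at $s=0$. By N2 these residues $\rho_1,\rho_2,\rho_3$ form the irreducible $\charge$-dimensional representation of $\mathfrak{su}(2)$; taking the limit $s\to 0$ of the displayed relations shows that $C$ commutes with $\rho_3$ and conjugates the raising operator $\rho_+=\rho_1+i\rho_2$ by $\omega^{-1}$. Because $\rho_3$ has $\charge$ distinct eigenvalues in the irreducible representation, $[C,\rho_3]=0$ already forces $C$ to be diagonal in the weight basis, and the relation $C\rho_+C^{-1}=\omega^{-1}\rho_+$ then fixes the ratios of consecutive diagonal entries, giving $C=c\,\Diag(1,\omega,\ldots,\omega^{\charge-1})$ for some scalar $c$. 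In other words, a Schur-type argument shows that the gauge transformation implementing the symmetry must be a scalar multiple of the image $\rho(R)$ of the rotation in the irreducible representation; as $C$ is $s$-independent, this identification then holds for all $s$.

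With $C$ in hand I would pass to its eigenbasis. There $[C,T_3]=0$ forces $T_3$ to be diagonal, while $C\beta C^{-1}=\omega^{-1}\beta$ forces $\beta$ to have nonzero entries only in the cyclic positions $(i,i+1)$ and the corner $(\charge,1)$; this is exactly the banded shape of (\ref{sutcliffeansatz}). The antihermiticity N3, together with the remaining constant diagonal gauge freedom (which preserves the shape), then lets me write $T_3=-\tfrac{i}{2}\Diag(p_1,\ldots,p_\charge)$ with real $p_i$ and bring the superdiagonal entries of $\beta$ to the real positive form $e^{(q_i-q_{i+1})/2}$ with real $q_i$, while the corner entry supplies the spectral parameter $w$. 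The reality constraint H1 and the transpose condition in N3 are compatible with this parametrisation, so the data is precisely Sutcliffe's ansatz.

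Finally, once the data is in the form (\ref{sutcliffeansatz}), Nahm's equations N1 are exactly the $A_{\charge-1}\sp{(1)}$ affine Toda equations, as already established, and the pole conditions of N2 reproduce the boundary behaviour of the Toda solution; nothing further is needed since the Hitchin conditions are equivalent to the Nahm conditions. I expect the main obstacle to be the middle step: rigorously establishing that the symmetry is realised by the specific constant $C=\rho(R)$ rather than by some more complicated or reducible compensating transformation, and then threading the antihermiticity and transpose conditions of N3 through the diagonalisation so as to land on genuine real Toda/Flaschka variables rather than merely a banded matrix. The irreducibility in N2, via the Schur argument at the poles, is exactly what makes this rigidity possible.
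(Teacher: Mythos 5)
This statement is not proved in the paper at all: it is quoted verbatim from \cite{Braden2011} (Theorem 3.1 there), so your attempt can only be measured against that cited proof and against the paper's own proof of its dihedral analogue, Theorem \ref{thm: main Dihedral}. Your strategy is essentially the cited one: cyclic invariance in the $T_4=0$ gauge is implemented by a constant $C\in U(\charge)$; evaluating the intertwining relations on the residues at $s=0$ and invoking irreducibility (N2) pins $C$ down, via Schur's lemma in the weight basis of $\rho_3$, to a scalar multiple of $\Diag(1,\omega,\ldots,\omega^{\charge-1})$; then $[C,T_3]=0$ and $C\beta C^{-1}=\omega^{-1}\beta$ force $T_3$ diagonal and $\beta$ cyclic-banded for all $s$, and diagonal gauge fixing lands on (\ref{sutcliffeansatz}). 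This is the same use of the principal (irreducible) embedding that the paper's discussion section attributes to the proof in \cite{Braden2011}. It is worth noting that the paper's own dihedral proof works in the same framework (a constant $C=\rho(A)$ intertwining rotated Nahm data) but proceeds differently: it starts from the already-established Sutcliffe form and determines $C$ by vectorising the linear constraints, rather than by a Schur argument at the pole.

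Two points that you flag but do not close are genuine and need to be addressed. First, writing the entries of $\beta$ as $e^{(q_i-q_{i+1})/2}$ with real $q_i$ requires every cyclic entry to be nowhere vanishing; by the flow equations (\ref{flaschkaeom}) an entry vanishing at one value of $s$ vanishes identically, which forces $\beta=0$ in the spectral curve (\ref{cyclick}), i.e.\ a reducible curve --- and such cyclically symmetric monopoles exist (e.g.\ the axially symmetric ones). The paper's proof of Theorem \ref{thm: main Dihedral} explicitly restricts to irreducible spectral curves at exactly the analogous step (``we cannot have any $a_i=0$''), so your argument needs the same hypothesis made explicit, or a separate treatment of the degenerate case. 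Second, after conjugating by constant diagonal unitaries the product of the $\charge$ cyclic entries of $\beta$ is a gauge invariant; antihermiticity (equivalently H1) makes it real but not necessarily positive, so one cannot always reach all-positive exponentials by gauge alone. The residual sign must be removed by a further rotation about the $x_3$-axis (by $\pi$ or $\pi/\charge$ depending on parity), so ``up to an overall rotation'' does more work than aligning the symmetry axis --- this is precisely the sign bookkeeping carried out in the paragraph following the paper's proof of Theorem \ref{thm: main Dihedral}. Neither issue invalidates your outline, but both must be incorporated for the proof to be complete.
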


This theorem has computational significance.
For a fixed subgroup $G \leq \SO(3)$ it was shown in \cite{Hitchin1995, Houghton1996, Houghton1996a, Houghton1996c, Braden2023, DisneyHogg2023} how to use invariant theory in order to construct $G$-invariant Nahm matrices. The resulting
ODE's in this construction are typically rather opaque, but knowing
the existence of the form (\ref{sutcliffeansatz}) it was shown in
\cite{Braden2023} how these ODE's could be regrouped to reproduce
this form.

The theorem does not assert the existence of appropriate solutions (for which a separate argument is required); Sutcliffe's elliptic solution
for $\charge=3$ failed though a solution in terms of functions on a genus-2 curve was given in \cite{Braden2011a}. Recently new elliptic solutions within this ansatz were found \cite{Braden2023}. These new solutions have an enhanced dihedral symmetry and satisfy a reduction of the affine Toda equations.
In this section we will discuss dihedrally symmetric monopoles
and their connection with (particular) affine Toda Equations leaving aside a discussion of their solution to a later section.  
First we need to specify which dihedral symmetry we have.

\subsection{Dihedral Symmetry and Spectral Curves}\label{sec: dihedral symmetry of the spectral curve}
We have already noted that subgroups of $E(3)$ that fix the monopole centre are the space groups and we are interested when this is a dihedral
\footnote{
We use the label $D_\charge$ to refer to the dihedral group of degree $\charge$ and order $2\charge$, following the convention \cite{LMFDB}. When $\charge$ is odd then $D_{2\charge}=D_\charge\times C_2$.} group. Now there are a number of  
space groups ($D_\charge$, $D_{\charge d}$, $D_{\charge h}$, $C_{\charge v}$) isomorphic as abstract groups to a dihedral group but only $D_\charge$ lies in $\SO(3)$ with the remaining groups containing reflections. This is the dihedral group we shall focus on. 

The general centred $C_\charge$-invariant spectral curve $\mathcal{C}$ (with reality imposed) is
\begin{equation}\label{cyclick}
\eta^\charge+ \alpha_2\eta\sp{\charge-2}\zeta^2+
 \alpha_3\eta\sp{\charge-3}\zeta^3+ \ldots+ \alpha_{\charge-1}\eta\,\zeta\sp{\charge-1}+ \alpha_\charge\zeta\sp{\charge}+\beta [
\zeta\sp{2\charge}+(-1)\sp{\charge}]=0,
\end{equation}
where $\alpha_i,\beta \in\mathbb{R}$. The axis of the $C_\charge$ symmetry has been aligned via an overall rotation with the $x_3$-axis such that $C_\charge = \pangle{s}$ where
\[
s = \begin{pmatrix}
    \cos(2\pi/\charge) & -\sin(2\pi/\charge) & 0 \\ \sin(2\pi/\charge) & \cos(2\pi/\charge) & 0 \\ 0 & 0 & 1
\end{pmatrix},
\]
which acts on the coordinates of $T\mathbb{P}^1$ as (for an $\charge$-root of unity $\omega$) $s:(\zeta, \eta) \mapsto (\omega\zeta, \omega\eta)$. The curve  (\ref{cyclick}) is the $\charge:1$ unbranched cover of the hyperelliptic curve $\mathcal{C}/C_\charge$ of genus $\charge-1$,
\begin{align}\label{hyperck}
y^2&=(x^\charge+ \alpha_2 x\sp{\charge-2}+ \alpha_3 x\sp{\charge-3}+ \ldots+  \alpha_{\charge})^2 -(-1)^\charge
4\beta^2,
\end{align}
where $x=\eta/\zeta$ and $y=\beta[ \zeta\sp{\charge} - (-1)^\charge\zeta\sp{-\charge}]$. 
The reality of $\alpha_i,\beta$ also means that (\ref{cyclick})
is invariant under $\sigma:(\zeta, \eta) \mapsto (\bar\zeta, \bar\eta)$
which corresponds to the reflection 
$\sigma:\mathbf{v}\rightarrow
\diag(-1,1,1)\mathbf{v}$ as seen earlier. Therefore (\ref{cyclick})
is also invariant under $t:=\mathfrak{J}\circ\sigma: (\zeta, \eta) \mapsto (-1/\zeta, -\eta/\zeta^2)$
 and\footnote{
 Noting $rt:(\zeta, \eta) \mapsto (-\zeta, \eta)$
we obtain the point groups
   $D_{\charge}=\pangle{s,r}$, $C_{\charge v} \cong D_{\charge}=\pangle{s,t}$ and $C_{\charge h} \cong C_\charge \times C_2=\pangle{s,rt}$. The prismatic dihedral group $D_{\charge h}=
   \pangle{s,r,t}$ is obtained by adding any two of the above to the rotations, so giving the third. Abstractly $D_{\charge h}\cong D_\charge\times C_2$.
 }
 $\pangle{s,t}$ (abstractly the dihedral group $D_\charge$) is the full automorphism group. On the
 quotient curve (\ref{hyperck}) $t$ becomes the
 hyperelliptic involution $t:(x,y)\rightarrow (x,-y)$.

From Equation (\ref{eq: action of rotation on TP1 variables}) the most general order-2 rotation about an axis orthogonal to the $x_3$-axis is $r_q : (\zeta, \eta) \mapsto (-\bar{q}/(q\zeta), \eta/(q\zeta)^2)$ for some $q \in \mathbb{C}$ with $\abs{q}=1$. Assuming that this rotation is a symmetry of $\mathcal{C}$ we get the equation
\begin{equation*}
\pround{\frac{1}{q\zeta}}^{2\charge}\pbrace{\eta^\charge+ \sum_{i=2}^\charge (-\abs{q}^2)^i\alpha_i\eta\sp{\charge-i}\zeta^i+\beta [
(-\bar{q})^{2\charge}+(-q^2)\sp{\charge}\zeta^{2\charge} ]}=0.
\end{equation*}
We see that for such a rotation to be a symmetry it is certainly necessary that $(-q^2)^\charge = 1$ and so $-q^2 = \omega^m$ for some $m \in \mathbb{Z}$. Conjugating $r_q$ by $s$ is equivalent to multiplying $q$ by $\omega$, and $q$ is only defined up to a sign. As such modulo these equivalences we get $q=i$ or, in the case $\charge$ is even, $q = i\omega^{1/2}$. Taking $q=i$ we have $r_q := r:(\zeta, \eta) \mapsto (1/\zeta, -\eta/\zeta^2)$, which corresponds to the rotation 
\[
r = \begin{pmatrix}
        -1 & 0 & 0 \\ 0 & 1 & 0 \\ 0 & 0 & -1  
    \end{pmatrix}.
\]
Had we instead taken $q = i\omega^{1/2}$ we would find $s\circ r_q = r$, and so either way the dihedral symmetry group is $D_\charge=\pangle{s,r}$. That this additional rotation extending $C_\charge$ to $D_\charge$ existed was recognised in \cite{Hitchin1995}. 

\begin{remark}\label{remark: other possible rotation for even charge}
    If $\charge$ is even, then $\tilde{r}: (\zeta, \eta) \mapsto (-1/\zeta, \eta/\zeta^2)$ corresponding to the matrix $\diag(1, -1, -1)$ is also a possible symmetry, and indeed once we have either $r$ or $\tilde{r}$ as a symmetry we have both, as $s^{\charge/2}:(\zeta, \eta) \mapsto (-\zeta, -\eta)$. Further composing we see $s^{\charge/2}rt : (\zeta, \eta) \mapsto (\zeta, -\eta)$, and so even-charge monopoles with $D_\charge$ symmetry are inversion-symmetric. This is not true when $\charge$ is odd. 
\end{remark}

Now invariance of (\ref{cyclick}) under $r$ means we keep only the even terms,
\begin{equation}\label{dihk}
\eta^\charge+ \alpha_2\eta\sp{\charge-2}\zeta^2+
 \alpha_4\eta\sp{\charge-4}\zeta^4+ \ldots +\beta [
\zeta\sp{2\charge}+(-1)\sp{\charge}]=0.
\end{equation}
Let us denote this curve by $\mathcal{C}'$;
the full automorphism group of $\mathcal{C}'$ is $D_{\charge}\times C_2$.
Setting $x=\eta/\zeta$ in (\ref{dihk}) we have
\begin{align}
x^\charge+ \alpha_2 x\sp{\charge-2}+ \alpha_4 x\sp{\charge-4}+ \ldots+  \alpha_\charge+\beta [
\zeta\sp{\charge}+\zeta\sp{-\charge}]&=0, &&\charge\ \text{even},
\label{clprecursorcurve}
\\
x^\charge+ \alpha_2 x\sp{\charge-2}+ \alpha_4 x\sp{\charge-4}+ \ldots+  \alpha_{\charge-1} x+\beta [
\zeta\sp{\charge}-\zeta\sp{-\charge}]&=0, &&\charge\ \text{odd}.
\nonumber
\end{align}
If $y=\beta[ \zeta\sp{\charge} - (-1)^\charge\zeta\sp{-\charge}]$ then 
$r:(x,y)\rightarrow (-x, (-1)^{\charge-1}y)$; thus
$y$ is invariant under $r$ only for odd $\charge$, in which case it will be a function on the quotient curve $\mathcal{C}''={\mathcal{C}'}/\pangle{s,r}$; for even $\charge$ the function $v=xy$ is invariant. Thus we have curves\footnote{Both curves have genus
$\charge-1$ for there are no branch-points associated with the $x^2$
factor in the first equation, this only being introduced so as to express the curve in manifestly invariant coordinates for the subsequent quotient.
}
\begin{align*}
v^2&=x^2(x^\charge+ \alpha_2 x\sp{\charge-2}+ \alpha_4 x\sp{\charge-4}+ \ldots+  \alpha_\charge)^2
-4\beta^2 x^2&&\charge\ \text{even},\\ 
y^2&=(x^\charge+ \alpha_2 x\sp{\charge-2}+ \alpha_4 x\sp{\charge-4}+ \ldots+  \alpha_{\charge-1} x)^2
+4\beta^2 &&\charge  \  \text{odd}.
\end{align*}
Setting $\charge=2l$ or $\charge=2l-1$ for the even and odd cases of the curves then with
$u=x^2$ we have these curves covering $2:1$ the curves
\begin{align}\label{diheven2}
v^2&=u(u^l+ \alpha_2 u\sp{l-1}+ \alpha_4 u\sp{l-2}+ \ldots+  \alpha_\charge)^2
-4\beta^2u &&\charge \ \text{even},\\ \label{dihodd2}
y^2&=u(u^{l-1}+ \alpha_2 u\sp{l-2}+ \alpha_4 u\sp{l-3}+ \ldots+  \alpha_{\charge-1})^2
+4\beta^2 &&\charge\ \text{odd}.
\end{align}
The first has genus $l$ and the second has genus $l-1$. 
Under the cyclic transformation, it was shown in \cite{Braden2011}
that
$$\frac{\eta^{\charge-2}d\zeta}{\partial_\eta P}=\pi\sp\ast\left(-\frac1k \frac{x^{\charge-2}dx}{y}\right)
$$
for the curve (\ref{hyperck})
and we observe that this differential is invariant under $r$ for $\charge$ both even and odd. Further
$$
\frac{x^{\charge-2}dx}{y}=\begin{cases} \dfrac{x^{2l-2}dx}{y}=\dfrac{x^{2l-2}du}{2xy}=
\dfrac{u^{l-1}du}{2v},\\
\dfrac{x^{2l-3}dx}{y}=\dfrac{x^{2l-4}du}{2y}=\dfrac{u^{l-2}du}{2y}.
\end{cases}
$$
In each case we obtain the maximum degree in $u$ differential on the corresponding hyperelliptic curve and the work of
\cite{Braden2011} tells us the Ercolani-Sinha vector, if it exists, will pullback from one on the quotient curve.

\subsection{Dihedral Symmetry and the \secmath{A_{\charge-1}\sp{(1)}} affine Toda equations}

We have shown then that with (rotational) $D_\charge$ symmetry the spectral curve $\mathcal{C}$ covers either the curves (\ref{diheven2}) or (\ref{dihodd2})
depending on the parity of $\charge$, and in either case the Ercolani-Sinha vector reduces to one on the quotient curve. We now wish to identify these curves with spectral curves obtained by a reduction of the 
$A_{\charge-1}\sp{(1)}$ affine Toda equations. At that stage we will have the equivalent of the Sutcliffe ansatz: an ansatz for Nahm matrices in terms of solutions to the reduced affine Toda equations.

Motivated by \cite{Braden2023} we will prove in the following section that the restrictions
\begin{equation}\label{reductionconds}
    a_i^2=a_{\charge -i}^2, \quad b_i+b_{\charge +1-i}=0,
\end{equation}
of the $A_{\charge-1}\sp{(1)}$ affine Toda equations follow from dihedral symmetry. Here we shall identify the resulting reductions of the Toda equations and that the curves (\ref{diheven2}) or (\ref{dihodd2}) are the relevant spectral curves of the reduced problem.

First, we may view the first equation of (\ref{reductionconds}) as defining a diagram automorphism $\tau$ and subsequently an automorphism of
$A_{\charge-1}\sp{(1)}$ and the second equation as defining the invariant subspace. Recalling our identification $i \leftrightarrow \alpha_i$ we have $\tau(\alpha_i)=\alpha_{\charge-i}$ and see for example
\cite{Goddard1986} on how to extend this to the algebra. This automorphism fixes $a_0$ and so $\alpha_0$ and thus is of the correct form for constructing $C_l\sp{(1)}$ or $A_{2l}\sp{(2)}$. 

We have seen that we must distinguish between the cases $\charge$ even or odd and begin with the even case $\charge=2l$. It is perhaps clearest to illustrate this by means of the example $k=4$. 
Under (\ref{reductionconds}) and the sign choice\footnote{We are choosing this simply for comparison with the cited reference; we discuss the sign ambiguity in due course.} $a_1=-a_3$ the Lax matrix (\ref{flaschkalax}) becomes
$$
\begin{pmatrix}
-b_{1} \zeta  & a_{1} & 0 & -a_{0}  \,\zeta^{2}w 
\\
 -a_{1} \zeta^{2} & -b_{2} \zeta  & a_{2} & 0 
\\
 0 & -a_{2} \zeta^{2} & -b_{3} \zeta  & a_{3} 
\\
 wa_{0} & 0 & -a_{3} \zeta^{2} & -b_{4} \zeta 
\end{pmatrix}\mapsto
\begin{pmatrix}
-b_{1} \zeta  & a_{1} & 0 & -a_{0}  \,\zeta^{2}/w 
\\
 -a_{1} \zeta^{2} & -b_{2} \zeta  & a_{2} & 0 
\\
 0 & -a_{2} \zeta^{2} & b_{2} \zeta  & -a_{1} 
\\
w a_{0} & 0 & -a_{1} \zeta^{2} & b_{1} \zeta 
\end{pmatrix}
$$
and we see \cite[(4.25)]{Adler1980a} describing the $C_2\sp{(1)}$ system
(equation (2.6) of that reference is to be compared with our (\ref{reductionconds})). Equally we may make this identification directly.
Again setting $w=1$ for simplicity, the first constraint of (\ref{reductionconds}) reduces the variables to $\varphi_0$,
$\varphi_{13}:=\varphi_1=\varphi_3$, $\varphi_2$ (where in general $\varphi_l$ is not paired). Now the Toda equations (\ref{eqntoda}) take the form
\begin{equation*}
    \ddot \varphi_{0}=2e\sp{\varphi_0}-2e\sp{\varphi_{13}},\quad
    \ddot \varphi_{13}=-e\sp{\varphi_0}+2e\sp{\varphi_{13}}-e\sp{\varphi_2},\quad
    \ddot \varphi_{2}=-2e\sp{\varphi_{13}}+2e\sp{\varphi_2},
\end{equation*}
which may be written as
$
\ddot\varphi_a=\sum_b {\overline K}\sp{T}_{ab}\, e\sp{\varphi_b}
$
where $\overline K$ is the $C_l\sp{(1)}$ Cartan matrix (\ref{cn1diagram}) and we have the $C_l\sp{(1)}$ affine Toda equations.
For our present purposes having identified the equations of motion we need not present the full Hamiltonian reduction and simply note that the constraints of (\ref{reductionconds}) ensure consistency, as will be proven later. The $C_l\sp{(1)}$ Toda spectral curve of \cite{Adler1980a} or \cite[(7)]{Martinec1996} 
is then identified 
with the curve $\mathcal{C}''$ of (\ref{clprecursorcurve}).

Much of the odd case $\charge=2l-1$ follows similarly. Again
it is perhaps clearest to illustrate this by example. The $k=3$ case that leads to the Bullough-Dodd equation given in \cite{Braden2023} is 
perhaps not the most illustrative and we will instead take
$k=5$. Now the first constraint of (\ref{reductionconds}) yields the variables $\varphi_0$,
$\varphi_{14}:=\varphi_1=\varphi_4$, $\varphi_{23}=\varphi_2=\varphi_3$ and the Toda equations (\ref{eqntoda}) become
\begin{equation*}
    \ddot \varphi_{0}=2e\sp{\varphi_0}-2e\sp{\varphi_{14}},\quad
    \ddot \varphi_{14}=-e\sp{\varphi_0}+2e\sp{\varphi_{14}}-e\sp{\varphi_{23}},\quad
    \ddot \varphi_{23}=-2e\sp{\varphi_{14}}+e\sp{\varphi_{23}}.
\end{equation*}
Recalling that the diagonal elements of a a Cartan matrix are $2$ and
the  coefficient of $e\sp{\varphi_{23}}$ in the final equation is 
not an integer multiple of this we may worry we are not seeing the Toda equations. The resolution as described earlier is that
$m_l=1/2$ is non-integral for this root system with three root lengths, and upon noting this we again arrive\footnote{Forgetting the root system interpretation, simply at the level of equations we are also free to simply shift $\varphi_{23}
\mapsto \varphi_{23}+\ln2$.} at the Toda equations
$$
\ddot\varphi_a=\sum_b {\overline K}\sp{T}_{ab}\,m_b\, e\sp{\varphi_b},\qquad
{\overline K}=\begin{pmatrix}
    2&-1&0\\-2&2&-1\\0&-2&2
\end{pmatrix},
$$
where now $\overline K$ is the $A_{2l}\sp{(2)}$ Cartan matrix (\ref{a2n2diagram}). 
We have calculated the spectral curve for $A_4\sp{(2)}$ in (\ref{a42curve}) which is seen to yield the appropriate
curve $\mathcal{C}''$ of (\ref{clprecursorcurve}), and the general case follows similarly. 
For purposes of comparison
\cite[Example 9.10]{Adler2004} discusses the $A_4\sp{(2)}$ case and our Lax reproduces this.
 In this regard it is worth noting that curves for the affine Toda equations 
are usually given by the vanishing of (\ref{flaschkacurve}) as a
function of $\eta$ and the affine parameter $w$ with constant $\zeta$
while here we construct the curve in terms of $\eta$ and $\zeta$
with $w=1$.

\section{The Theorem}\label{sec: the theorem}

The theorem we shall prove is the following. 
\begin{theorem}\label{thm: main Dihedral}
Any $D_\charge$ rotationally dihedrally symmetric charge-$\charge$ Euclidean $\mathfrak{su}(2)$ BPS monopole is, up to an overall rotation, given by Nahm data gauge equivalent to (\ref{sutcliffeansatz}) obtained from the $\mathfrak{su}(\charge)$ affine Toda equations $A_{\charge-1}\sp{(1)}$ subject to the restriction of the variables (\ref{reductionconds}) arising from the folding procedure. For $k=2l$ and $k=2l-1$ these are respectively the $C_l^{(1)}$ and $A_{2(l-1)}^{(2)}$ affine Toda equations.
\end{theorem}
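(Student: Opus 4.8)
The plan is to build on the cyclic result (Theorem 3.1 of \cite{Braden2011}) and then impose the single extra rotation that upgrades $C_\charge$ to $D_\charge$. Since $C_\charge=\pangle{s}\le D_\charge=\pangle{s,r}$, any $D_\charge$-symmetric monopole is in particular $C_\charge$-symmetric, so after an overall rotation aligning the $C_\charge$-axis with the $x_3$-axis its Nahm data is gauge equivalent to the Sutcliffe form (\ref{sutcliffeansatz}) coming from $A_{\charge-1}\sp{(1)}$. Using the residual rotation about $x_3$, which preserves the Sutcliffe form, I would arrange the second generator to be $r:(\zeta,\eta)\mapsto(1/\zeta,-\eta/\zeta^2)$, i.e. the matrix $\diag(-1,1,-1)$ computed in \S\ref{sec: dihedral symmetry of the spectral curve}. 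It then remains to translate $r$-invariance into constraints on the Toda variables.

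Next I would work out the action of $r$ on the Nahm data. Since $T_4=0$ has been fixed, the monopole is $r$-symmetric precisely when there is a \emph{constant} unitary $U$ with $r_{ij}\rho(r)T_j\rho(r)^{-1}=UT_iU^{-1}$; absorbing $\rho(r)$ into $W:=\rho(r)^{-1}U$ and using $r=\diag(-1,1,-1)$ this reads $WT_1W^{-1}=-T_1$, $WT_2W^{-1}=T_2$, $WT_3W^{-1}=-T_3$ for some constant unitary $W$. Since $\beta=T_1+iT_2$ and $\beta\sp\dagger=-T_1+iT_2$, the first two are equivalent to the single condition $W\beta W^{-1}=\beta\sp\dagger$, so the whole content of $r$-symmetry is the pair $W\beta W^{-1}=\beta\sp\dagger$ and $WT_3W^{-1}=-T_3$. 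The heart of the argument is to show that these force $W$ to be the flip $i\mapsto\charge+1-i$ (up to phases) and hence to extract (\ref{reductionconds}).

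To pin down $W$ I would argue as follows. In Sutcliffe form $T_3=-\tfrac{i}{2}\diag(b_1,\dots,b_\charge)$, whose spectrum is generically simple; as $W$ is constant in $s$ while $WT_3(s)W^{-1}=-T_3(s)$ must hold for all $s$, evaluating at $s$ where the eigenvalues are distinct forces $W$ to be a signed permutation matrix $W=DP_\sigma$ realising a pairing $b_i\mapsto -b_{\sigma(i)}$. Because $\beta$ is supported on the cyclic superdiagonal and $\beta\sp\dagger$ on the cyclic subdiagonal, $W\beta W^{-1}=\beta\sp\dagger$ forces $\sigma(i+1)=\sigma(i)-1$, i.e. $\sigma$ is a reflection $\sigma(i)=c-i$ modulo $\charge$; the different reflections differ by conjugation by powers of $s$, which is exactly the leftover rotational freedom about $x_3$, so without loss of generality $\sigma(i)=\charge+1-i$. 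Reading off the diagonal in $WT_3W^{-1}=-T_3$ then gives $b_i+b_{\charge+1-i}=0$ (the diagonal phases $D$ dropping out), while comparing the off-diagonal entries of $W\beta W^{-1}=\beta\sp\dagger$ gives $a_i\,d_{\charge+1-i}\bar d_{\charge-i}=a_{\charge-i}$, whose modulus yields $a_i^2=a_{\charge-i}^2$ — the residual $\pm$ being precisely the phase freedom in $D$ and the source of the sign ambiguity noted in \S\ref{sec: dihedral symmetry of the spectral curve}. These are exactly the restrictions (\ref{reductionconds}).

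Finally, (\ref{reductionconds}) is precisely the invariant-subspace data for the diagram automorphism $\tau:\alpha_i\mapsto\alpha_{\charge-i}$ of $A_{\charge-1}\sp{(1)}$ identified in \S\ref{sec: dihedral monopoles}, so the reduced Nahm data is that of the folded system: $C_l\sp{(1)}$ for $\charge=2l$ and $A_{2(l-1)}\sp{(2)}$ for $\charge=2l-1$, with the reduced equations of motion and spectral curves matched there. I expect the main obstacle to be the rigorous pinning-down of $W$: one must handle the non-generic loci where $T_3$ has repeated eigenvalues (where $W$ need not a priori be a signed permutation), verify compatibility with the dihedral relation $rsr^{-1}=s^{-1}$ and $r^2=1$ so that the reduction is consistent with the already-fixed $C_\charge$-gauge, and confirm that the pole/residue conditions \textbf{N2} at $s=0,2$ are respected by the folded data. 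A useful cross-check is that the coefficients of the Flaschka spectral curve (\ref{flaschkacurve}) then collapse to the $D_\charge$-invariant curve (\ref{dihk}), matching the hyperelliptic quotients (\ref{diheven2})--(\ref{dihodd2}).
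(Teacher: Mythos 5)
Your proposal is correct and follows the same skeleton as the paper's proof: reduce via the cyclic theorem to the Sutcliffe/Flaschka form, express the extra order-2 rotation as conjugation by a \emph{constant} unitary, show that unitary must be a phased anti-diagonal flip, read off the constraints (\ref{reductionconds}), and invoke the folding identification already made in \S\ref{sec: dihedral monopoles}. The two places where your route genuinely differs are worth comparing. First, to pin down the intertwiner you diagonalise: $WT_3W^{-1}=-T_3$ with generically simple spectrum forces $W=DP_\sigma$, and you rightly flag the degenerate locus (pairs with $b_i+b_j\equiv 0$ beyond the forced pairing) as the main obstacle. The paper instead vectorises the symmetry equations and extracts the pairing permutation purely from invertibility of $C$ — a nonvanishing term of $\det C$ selects $\sigma$ with $b_i+b_{\sigma(i)}=0$ — which needs no genericity at that step (though its subsequent parametrisation $V(C)=Kv$ of $\Ker M_3$ carries the same implicit assumption yours does, and its ``without loss of generality by permuting the indices'' is no more careful than your appeal to conjugation by powers of $s$; note that for even $\charge$ the cycle has two rotation-classes of reflections, and strictly one should choose the order-2 rotation in $D_\charge$ whose permutation is the standard flip). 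Second, to get $a_i^2=a_{\charge-i}^2$ you take moduli in $a_i d_{\charge+1-i}\bar d_{\charge-i}=a_{\charge-i}$, using unitarity of $W$; the paper instead derives it \emph{dynamically}, by requiring $b_i+b_{\charge+1-i}=0$ to be preserved under the Toda flow, proving the ratios $\lambda_i=a_{\charge-i}/a_i$ are constants of motion, and propagating $\lambda_1^2=1$ up the chain using irreducibility (all $a_i\neq 0$). Your argument is more direct but leans on $W$ being exactly a phased permutation; the paper's survives with only the support information and meshes with its bookkeeping of a priori complex $a_i$, the rotation that makes them real, and the resulting sign classes (whence its observation that $\diag(1,-1,-1)$ cannot be a symmetry for odd $\charge$). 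Two details you should make explicit: your support argument forcing $\sigma(i+1)=\sigma(i)-1$ silently assumes every $a_i$ is nowhere zero, which both proofs obtain from irreducibility of the spectral curve ($\beta\neq 0$); and your worry about the pole conditions \textbf{N2} is moot, since all gauge transformations used are constant and the constraints are derived pointwise in $s$ (the paper defers the residue/boundary analysis to \S\ref{sec: solutions}).
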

\begin{proof}
	Recall that for a monopole to be invariant under a rotation by $A \in \SO(3)$, there must exist a constant matrix $C = \rho(A) \in \SU(\charge)$ such that the corresponding Nahm data satisfies 
	\[
	T_i = \sum_{j=1}^3 A_{ij} \psquare{C T_j C^{-1}}. 
	\]
As shown in \S\ref{sec: dihedral symmetry of the spectral curve} by rotating the monopole overall we need only consider the case of $A = r = \diag(-1, 1, -1)$, but for the sake of making an observation later we shall pick the matrix $A$ to be the more general $\diag((-1)^m, (-1)^{m-1}, -1)$ for some $m$. We shall vectorise\footnote{All we shall require about vectorisation is that it is a linear map $V$ sending matrices to vectors such that $V(ABC) = (C^T\otimes A)V(B)$.} the above matrix equation with the diagonal $A$ to get the conditions 
	\[
	\psquare{1 \otimes T_i - A_{ii} T_i^T \otimes 1}V(C) = 0 ,
	\]
	which we will rewrite as 
	\[
	\underbrace{\psquare{1 \otimes (T_1 \pm iT_2) - (-1)^m(T_1\mp iT_2)^T \otimes 1}}_{M_\pm}V(C) = 0 = \underbrace{\psquare{1 \otimes (2iT_3) + (2iT_3) \otimes 1}}_{M_3}V(C).
	\]
 To write these conditions explicitly, recall in terms of elementary matrices we have 
	\begin{align*}
T_1 + i T_2 &=	a_0 E_{\charge,1} + \sum_{i=1}^{\charge-1} a_i E_{i, i+1}, \\
2 i T_3 &= \sum_{i=1}^\charge b_i  E_{i, i}. 
	\end{align*}
This gives $M_3 = \diag(m_i)$ with $m_{\charge(i-1)+j} = b_i+b_j$ for $i,j=1, \dots, \charge$. For this matrix to have nontrivial kernel giving an invertible matrix is it certainly necessary that there exists a permutation $\sigma \in S_\charge$ such that $b_{i}+b_{\sigma(i)}=0$, and this permutation will be order two. As such, we can (without loss of generality by permuting the indices) choose $\sigma(i) = \charge+1-i$. In order to have this condition remain valid over the flow it is necessary that 
\begin{equation}\label{eq: consistency of b restriction}
\dot{b}_i + \dot{b}_{\charge+1-i} = \pround{a_i^2 - a_{i-1}^2} + \pround{a_{\charge+1-i}^2 - a_{\charge-i}^2} = 0,
\end{equation}
and we shall impose this consistency condition later. 

Having imposed the restrictions on the $b_i$, we get the $\charge^2 \times \charge$ matrix
\[
K = \sum_{i=1}^\charge E_{\charge(i-1) + (\charge+1-i), i} = \sum_{i=1}^\charge E_{(\charge-1)i+1, i}
\]
whose columns span $\Ker M_3$. It is simple to compute that 
\begin{align*}
\psquare{1 \otimes (T_1 + iT_2)}K &= a_0 E_{\charge^2, \charge} + \sum_{i=1}^{\charge-1} a_{\charge-i} E_{i(\charge-1), i}, \\
\psquare{(T_1 - iT_2)^T \otimes 1}K &= [-\overline{a_0}] E_{\charge^2, 1} + \sum_{i=1}^{\charge-1} [-\overline{a_i}] E_{(\charge-1)i, i+1},
\end{align*}
noting that at present we have not assumed the $a_i$ are real-valued. As such if we write $V(C) = Kv$ for some length-$\charge$ vector $v$ we then get the equations
\begin{align*}
    a_0 v_\charge + (-1)^m \overline{a_0}v_1 &= 0, \\
    a_{\charge-i} v_i + (-1)^m \overline{a_i} v_{i+1} &= 0, \quad i=1, \dots, \charge-1.
\end{align*}
Recall now at this point we are considering when the spectral curve is irreducible. This certainly means $\beta \neq 0$, and so we cannot have any $a_i=0$ anywhere. As such we solve these equations iteratively to have $v_{i+1} = (-1)^{m-1} \frac{a_{\charge-i}}{\overline{a_i}}v_i$, yielding the compatibility condition (using $\beta \propto \prod_{i=0}^{\charge-1}a_i$)
\[
(-1)^{m-1} \frac{\overline{a_0}}{a_0}v_1 = v_\charge = (-1)^{(\charge-1)(m-1)} \frac{\prod_i a_i}{\prod_i \overline{a_i}} v_1 \Rightarrow \overline{\beta} = (-1)^{\charge(m-1)}\beta. 
\]
At this point we shall restrict to the case where we have rotated the overall monopole about the axis of the $C_\charge$-rotation so as to make the $a_i$ real-valued. Note that by doing so, we may be forced into a choice of additional order-2 rotational symmetry, we cannot just assume it. As an example of this, consider the consistency condition: with $a_i \in \mathbb{R}$ it now reduces to $\beta = (-1)^{\charge(m-1)}\beta$, and so to avoid $\beta = 0$ in the case of odd $\charge$ we must have $m$ odd, i.e. $A = \diag(-1, 1, -1)$. This tells us that $A = \diag(1, -1, -1)$ cannot be a symmetry in the $\charge$ odd case, but that it may in the $\charge$ even case. We have seen this earlier from the perspective of transforming the variables of the spectral curve in Remark \ref{remark: other possible rotation for even charge}.

Define now the ratios $\lambda_i = \frac{a_{\charge-i}}{a_i}$ for $i=1, \dots, \charge-1$, which we need to be constant in order to have $C$ be a constant matrix in $\SU(\charge)$. We see
\begin{align*}
\dot{\lambda}_i &= \frac{1}{{{a_i}^2}} \psquare{\dot{a}_{\charge-i} {a_i} - a_{\charge-i} {\dot{a}_i}} , \\
&= \frac{a_{\charge-i}}{{2{a_i}}} \psquare{(b_{\charge-i} - b_{\charge-i+1}) - (b_i - b_{i+1})}, \\
& = 0,
\end{align*}
when we have restricted to $b_{i} + b_{\charge+1-i} = 0$. This tells us that if $a_i^2 - a_{k-i}^2=0$ at some $s$, it is zero for all $s$. Finally, we wish to impose the consistency condition of Equation (\ref{eq: consistency of b restriction}), yielding 
\begin{align*}
a_i^2 (1-\lambda_i^2) - a_{i-1}^2(1 - \lambda_{i-1}^2) &= 0, \quad i=2, \dots, \charge-1, \\
a_{1}^2 (1- \lambda_1^2 ) &= 0.
\end{align*}
If for all $i$, $\lambda_i^2 = 1$, we get a solution to the above equations. If for some $i$, $\lambda_{i-1}^2 =1$ but $\lambda_i^2 \neq 1$, then $a_i = 0$ contradicting our assumption of irreducibility. As such all $\lambda_i^2=1$ as $\lambda_1^2=1$. 

What we have thus shown is that in the case where all the $a_i$ are real valued, provided we have the restrictions $a_i^2 - a_{\charge-i}^2 = 0 = b_{i} + b_{\charge+1-i}$ for $i=1, \dots, \charge-1$, then we can solve the equations 
\[
M_+ V(C) = 0 = M_3 V(C),
\]
with the solution being $V(C) = Kv$ where $v_{i+1} = (-1)^{m-1}\frac{a_{\charge-i}}{a_i}v_i := \tilde{\lambda}_i v_i$. Now because of our implicit conventions in vectorisation this gives (taking $\tilde{\lambda}_0 :=1 $) the antidiagonal matrix
\[
C = \sum_{i=1}^\charge v_i E_{i, \charge+1-i} = v_1 \sum_{i=1}^\charge \tilde{\lambda}_{i-1} E_{i, \charge+1-i}.
\]
One can check that 
\[
\det(C) = (-1)^{\floor{\charge/2}}v_i^\charge \prod_i \tilde{\lambda}_i  = (-1)^{\floor{\charge/2} + (m-1)}  \prod_i \lambda_i  v_1^\charge ,
\]
and moreover that 
\[
C C^\dagger = \abs{v_1}^2 \diag(\abs{\tilde{\lambda}_i}^2) = \abs{v_1}^2 I. 
\]
As such choosing $v_1$ appropriately we can ensure that $C \in \SU(\charge)$. Because we can achieve this with $v_1$ either pure real or imaginary this further means 
\[
M_- V(C) = \pm \overline{M_+ V(C)} = 0,
\]
and so we can solve all of the equations to find an appropriate constant $C \in \SU(\charge)$. 
\end{proof}

Changing the sign of any single $a_i$ changes the sign of $\beta$, and so clearly two choices of $\pbrace{\lambda_i}_{i=1}^{\floor{k/2}}$ which differ only for one $i$ do not give gauge equivalent monopoles, but the corresponding monopoles are related by a rotation. For odd $\charge$ changing the sign of $\beta$ just corresponds to rotating the overall monopole by angle $\pi$ about the $x_3$-axis, and so two sign changes give gauge equivalent monopoles. For even $\charge$ changing the sign of $\beta$ just corresponds to rotating the overall monopole by angle $\pi/2$ about the $x_3$-axis and so two sign changes gives gauge equivalent monopoles. As such, two choices of $\pbrace{\lambda_i}_{i=1}^{\floor{k/2}}$ give gauge equivalent monopoles if and only if they differ at an even number of $i$. Choosing all $\lambda_i = -1$ recovers the identification with \cite{Adler1980a} made earlier.

\section{Solutions}\label{sec: solutions}
We know from the rational map construction of monopoles \cite{Hitchin1995} that monopoles with dihedral symmetry must exist. This means that there must exist solutions to the affine Toda equations with pole behaviour at $s=0,2$ such that the corresponding Nahm matrices have simple poles whose residues form an irreducible representation of $\mathfrak{su}(2)$. Historically, it was the difficulty in identifying these boundary conditions that prevented Sutcliffe from finding the dihedrally symmetric 3-monopole \cite{Sutcliffe1996a}. We shall now describe a procedure to find equations for these boundary conditions and solve them.

The method we shall describe involves constructing the \emph{residue variety}. We define the matrices $X_i := -\lim_{s\to 0} (sT_i)$ which by assumption are well-defined complex matrices. These satisfy the equations 
\begin{align}
    [X_i, X_j] &= \sum_{l=1}^3 \epsilon_{ijl}X_l , \label{eq: equations for residue variety}\\
    \sum_{i=1}^3 X_i^2 &= \frac{-1}{4}(\charge^2-1)I . \label{eq: equations for residue variety 2}
\end{align}
Equation (\ref{eq: equations for residue variety}) is necessary and sufficient for the $X_i$ to give a $\charge$-dimensional representation of $\mathfrak{su}(2)$ and follows from Nahm's equations. Equation (\ref{eq: equations for residue variety 2}) is necessary for the representation to be irreducible, which follows from realising that $\sum_i X_i^2$ is the representation of a Casimir of $\mathfrak{su}(2)$ \cite{Taylor1986}. Together Equations (\ref{eq: equations for residue variety}) and (\ref{eq: equations for residue variety 2}), expanded in terms of the entries of the $X_i$, define an affine variety in $\mathbb{C}^{3\charge^2}$ which we call the residue variety. Elements of the reside variety give boundary conditions for Nahm's equations at $s=0$. 

Return now to the Nahm matrices (\ref{sutcliffeansatz}) written in terms of the Flaschka variables, introduce Laurent series expansions around $s=0$
\begin{align*}
    a_i &= \frac{A_i}{s} + \sum_{r \geq 0} a_{i,r} s^r, \\
    b_i &= \frac{B_i}{s} + \sum_{r \geq 0} b_{i,r} s^r , 
\end{align*}
and write $A_\charge = A_0$. We are able to explicitly describe the residue variety in this case. 

\begin{prop}
    Given $\charge > 1$, the residue variety is cut out by the equations  
    \begin{align*}
        A_i &= -\frac{1}{2} A_i (B_{i} - B_{i+1}), \\
        B_i &= -(A_i^2 - A_{i-1}^2), \\
        \frac{1}{4} (\charge^2-1) &= A_i^2 + \frac{1}{2} B_i + \frac{1}{4 }B_i^2. 
    \end{align*}
    Moreover, the intersection of the residue variety and the plane $A_0 = 0$ is exactly the collection of points
    \begin{align*}
        A_i^2 &= i(\charge-i), \\
        B_i &= -(\charge-2i+1). 
    \end{align*}
\end{prop}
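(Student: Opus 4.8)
The plan is to derive the cutting equations by substituting the Laurent expansions of the Flaschka variables into the equations of motion (\ref{flaschkaeom}) and into the Casimir condition (\ref{eq: equations for residue variety 2}), matching the leading-order ($1/s^2$) terms. Recall that the matrices $X_i = -\lim_{s\to 0}(sT_i)$ are by definition built from the residues $A_i, B_i$ of $a_i, b_i$. First I would note that $T_1 \pm iT_2$ and $2iT_3$ are expressed in terms of $a_i, b_i$ via (\ref{sutcliffeansatz}), so the residues $X_1 \pm iX_2$ and $2iX_3$ are the same matrices with $a_i \mapsto -A_i$, $b_i \mapsto -B_i$ (the sign from the definition of $X_i$). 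The equations (\ref{eq: equations for residue variety}) are automatically equivalent, at leading order, to the requirement that the leading Laurent coefficients satisfy the same algebraic relations that Nahm's equations impose on the full flow; concretely, substituting $a_i \sim A_i/s$, $b_i \sim B_i/s$ into (\ref{flaschkaeom}) and equating the $1/s^2$ coefficients yields exactly the first two displayed equations, $A_i = -\tfrac12 A_i(B_i - B_{i+1})$ and $B_i = -(A_i^2 - A_{i-1}^2)$. The third equation comes from reading off the diagonal entries of the Casimir relation $\sum_i X_i^2 = -\tfrac14(\charge^2-1)I$ in the representation (\ref{sutcliffeansatz}): a direct matrix multiplication gives the $(i,i)$ entry as $A_i^2 + \tfrac12 B_i + \tfrac14 B_i^2$, which must equal $\tfrac14(\charge^2-1)$.

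**Next**, for the second assertion I would set $A_0 = 0$ and solve the resulting system. With $A_0 = 0$ the recursion $B_i = -(A_i^2 - A_{i-1}^2)$ telescopes, and combined with the first equation $A_i\bigl(1 + \tfrac12(B_i - B_{i+1})\bigr) = 0$ one can argue that (away from the degenerate branches where some $A_i$ vanishes, which correspond to reducible representations) $B_{i+1} - B_i = 2$, forcing $B_i$ to be an arithmetic progression. The normalisation $\sum_i B_i = 0$ (equivalently $\Tr T_3 = 0$, or the condition $\sum_i b_i = 0$ noted after (\ref{sutcliffeansatz})) then pins down the constant, giving $B_i = -(\charge - 2i + 1)$. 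Feeding this back into $A_i^2 = A_{i-1}^2 - B_i$ and summing the telescoping series from $A_0 = 0$ yields $A_i^2 = \sum_{j=1}^i (\charge - 2j + 1) = i(\charge - i)$. One should then verify consistency with the Casimir equation, which is the familiar identity for the irreducible spin-$(\charge-1)/2$ representation.

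**The main obstacle** I anticipate is not the leading-order matching but rather establishing that the plane section $A_0 = 0$ forces precisely this point and no spurious components: one must rule out the branches where $A_i(1 + \tfrac12(B_i - B_{i+1})) = 0$ is satisfied by $A_i = 0$ for some $i$ rather than by the bracket vanishing. The clean way to handle this is to recall that $A_0 = 0$ means $\beta = 0$ in the spectral curve, so the representation degenerates from the cyclic $A_{\charge-1}^{(1)}$ structure to a genuine finite $A_{\charge-1}$ (open Toda) configuration; there the residues are forced to assemble into the \emph{unique} (up to conjugacy) irreducible $\charge$-dimensional representation of $\mathfrak{su}(2)$, whose standard Chevalley form has exactly these $A_i^2 = i(\charge-i)$ and $B_i = -(\charge-2i+1)$. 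Invoking irreducibility together with the Casimir normalisation (\ref{eq: equations for residue variety 2}) thus eliminates the vanishing branches and identifies the section uniquely, which is the step I would write out most carefully.
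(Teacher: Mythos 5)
Your derivation of the three defining equations coincides with the paper's: leading-order ($1/s^2$) matching of the Laurent expansions in the Flaschka equations (\ref{flaschkaeom}) gives the first two, and the diagonal of $\sum_i X_i^2$ gives the third (one small slip: with the paper's conventions that diagonal entry is $-\bigl(A_i^2+\tfrac12 B_i+\tfrac14 B_i^2\bigr)$, which is then equated to $-\tfrac14(\charge^2-1)$; your stated entry has the wrong sign, though the resulting equation is the same).

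For the section $A_0=0$ your route is genuinely different. The paper argues entirely by elementary algebra: no two consecutive $A_i$ can vanish (else $B_i=0$, contradicting the third equation), so $A_{\charge-1}\ne 0$; the $i=\charge$ equations give $B_\charge=A_{\charge-1}^2$ and the quadratic $\charge^2-1=2B_\charge+B_\charge^2$, whence $B_\charge=\charge-1$; then a downward iteration forces each $B_i$ and $A_i^2$ in turn, every forced value being nonzero, which simultaneously rules out further vanishing $A_i$. You instead extract the arithmetic progression $B_{i+1}=B_i+2$ from the first family, pin the constant with $\sum_i B_i=0$, and telescope for $A_i^2$, using representation theory to exclude vanishing branches. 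This is viable, and even has an advantage: the normalisation $\sum_i B_i=0$ is legitimate (it holds on the whole residue variety by summing the second family cyclically, or from $\Tr T_3=0$), and it bypasses the two-root ambiguity of the paper's quadratic, which the paper dismisses only parenthetically (the root $B_\charge=-\charge-1$ in fact fails to close the cycle back to $A_0^2=0$). However, the two steps you defer must be done carefully. First, irreducibility cannot simply be invoked: the paper states the Casimir equation is only \emph{necessary} for irreducibility, so you must show that points of the residue variety are automatically irreducible --- this is true, by complete reducibility of finite-dimensional $\mathfrak{sl}_2(\mathbb{C})$ representations together with the fact that the Casimir eigenvalue determines the dimension of each irreducible summand --- and then make the contradiction concrete: if $A_0=0$ and also $A_j=0$ for some $1\le j\le\charge-1$, then $\mathrm{span}(e_1,\dots,e_j)$ is invariant under all three $X_i$. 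Second, ``unique up to conjugacy'' does not by itself fix matrix entries, since conjugation destroys your bidiagonal normal form; you should use irreducibility only to exclude the zeros and let your telescoping algebra fix the values (or, alternatively, argue via the weight-ladder structure of $2iX_3$ that the diagonal entries are the weights in consecutive order).
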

\begin{proof}
The first two equations follow simply from the Toda equations substituting in the Laurent series expansion. To find the third equation it is easiest to note 
\[
\sum_i X_i^2 = \psquare{(X_1 + iX_2)(X_1 - iX_2) + 2iX_3} + X_3^2,
\]
from which it follows via algebra that 
\[
\sum_i X_i^2 = -\diag\psquare{A_i^2 + \frac{1}{2}B_i + \frac{1}{4} B_i^2}. 
\]

We now want to further impose $A_0=0$. This is not an artifical condition to impose: if all $A_i$ are nonzero then we have the equations
\begin{align*}
    B_{i+1} &= B_i + 2, \quad i=1, \dots, \charge-1, \\
    B_1 &= B_\charge + 2,
\end{align*}
and these are inconsistent. As such, at least one of the $A_i$ is zero, and by cycling the indices we can choose it to be $A_0$. 
Observe that we thus get
\[
B_\charge = A_{\charge-1}^2, \quad \charge^2 - 1 = 2B_\charge + B_\charge^2 \quad \Rightarrow \quad B_\charge = \charge-1. 
\]
(The alternate solution to the quadratic is $B_\charge = -\charge-1$). Note now that if we had $A_i = 0 = A_{i-1}$ for some $i$, then $B_i = 0$ by the second defining equation and then the third defining equation is inconsistent. As such we know $A_{\charge-1} \neq 0$, and so we can determine $B_{\charge-1}$ by the linear equation
\[
1 + \frac{1}{2}(B_{\charge-1} - B_\charge) = 0. 
\]
We may then determine $A_{\charge-1}$ using the third defining equation, and it will be nonzero. This allows us to iterate the procedure down to find all the $B_i$ and $A_i$, giving the desired values. 
\end{proof}

We observe that for our residues of the Flaschka variables we have
\[
A_i^2 - A_{\charge-i}^2 = 0 = B_i + B_{\charge+1-i},
\]
which are exactly the conditions required in order to have dihedrally symmetric charge-$\charge$ Nahm data.

\section{Discussion}\label{sec: discussion}

The purpose of this short work in proving Theorem \ref{thm: main Dihedral} has been to generalise the work of \cite{Sutcliffe1996a, Braden2011} to give a complete description of the Nahm matrices for charge-$\charge$ monopoles with a $D_\charge$ rotational symmetry in terms of the affine Toda equations associated to certain Lie algebras, namely $C_{l}^{(1)}$ and $A_{2l}^{(2)}$. This we were able to achieve by making clear the correspondence between folding the $A_{\charge-1}^{(1)}$ Lie algebra associated $C_\charge$-symmetric monopoles and imposing an additional order-2 rotational symmetry on these monopoles. Appealing to the McKay correspondence, Sutcliffe \cite{Sutcliffe1996a} had expected that dihedrally symmetric monopoles would be related to type-D simply laced Lie algebras, but this appears not to be the case.
Although we do not solve the reduced equations here, leaving this for a future work, we do show that the constraints (\ref{reductionconds})
allow for the irreducible representaion required of Nahm data. When we return to the finite-gap integration of our reduced equations we
hope to relate our curves to the known descriptions where Toda flows lie on certain Prym varieties related to the spectral curve (see
\cite{Adler1980a,Kanev1989,McDaniel1992}) noting again that the latter
are in terms of $\eta$ and the affine parameter $w$ which differ from our parameterizations of the curves.

A new discovery we had not predicted was that the residue variety for $C_\charge$-invariant $\charge$-monopoles completely determines the asymptotic behaviour of the Flaschka variables as $s \to 0$. If one were to find a real solution to the $A_{\charge-1}^{(1)}$ affine Toda equations with simple poles for the $a_i, b_i$ at $s=0$ they need not necessarily have the correct residues to get valid Nahm data by violating the irreducibility condition, as was seen in \cite{Sutcliffe1996a}. The same is true for $D_\charge$-invariant monopoles, but by having already imposed the conditions $A_i^2 - A_{\charge-i}^2 = 0 = B_i + B_{\charge+1-i}$ the number of remaining choices is reduced, so much so that in the $\charge=3$ case just requiring any $A_i$ is nonzero is sufficient. This suggests that aiming to solve the Nahm constraints for $D_\charge$-symmetric $\charge$-monopoles may be less difficult than for other monopoles of the same charge, owing also to the fact that we know for such curves Nahm's equations linearise on the Jacobian of a genus-$\floor{\frac{k}{2}}$ curve. As a starting point for future work, one might investigate the charge-4 $D_4$-symmetric monopoles for which the Nahm data can be solved for in terms of genus-2 hyperelliptic functions, as the $\charge=3$ case was already solved \cite{Braden2023}.

One aspect that we have not commented on in this work is the role of the principal $\SO(3)$ embedding. It is known from study of affine Toda field theory \cite{Olive1993} that there is a key interplay between this and the automorphism involved in folding; similarly the
principal subalgebra played a crucial role in the proof of \cite{Braden2011}. Clarification of this may lead to employing the
ansatz (\ref{nahmansatz}) in describing higher-charge monopoles.


\addcontentsline{toc}{chapter}{Bibliography}
\bibliographystyle{amsalpha}

\providecommand{\bysame}{\leavevmode\hbox to3em{\hrulefill}\thinspace}
\providecommand{\MR}{\relax\ifhmode\unskip\space\fi MR }
\providecommand{\MRhref}[2]{%
  \href{http://www.ams.org/mathscinet-getitem?mr=#1}{#2}
}
\providecommand{\href}[2]{#2}

\end{document}